\newcommand{\eps}{\varepsilon}
\renewcommand{\epsilon}{\eps}
\newcommand{\ignore}[1]{}
\newcommand{\YES}{\textsc{Yes}}
\newtheorem{reductionrule}{\bf Reduction rule}
\newcommand{\Pb}{\ensuremath{\mathsf{P}}\xspace}
\newcommand{\NPCb}{\ensuremath{\mathsf{NPC}}\xspace}
\newcommand{\caveat}{\ensuremath{\mathsf{CoNP\subseteq NP/Poly}}\xspace}
\newcommand{\CC}{{\mathcal C}}
\newcommand{\FF}{{\mathcal F}}
\newcommand{\HH}{{\mathcal H}}
\newcommand{\UU}{{\mathcal U}}
\newcommand{\VV}{{\mathcal V}}
\newcommand{\defparproblem}[4]{
\begin{center}
\noindent\fbox{

  \begin{minipage}{\textwidth}
  \begin{tabular*}{\textwidth}{@{\extracolsep{\fill}}lr} \textsc{#1}  & {\bf{Parameter:}} #3 \\ \end{tabular*}
  {\bf{Input:}} #2  \\
  {\bf{Question:}} #4
  \end{minipage}
 
  }
\end{center}
}
\begin{document}

\title{Kernelization Complexity of Possible Winner and Coalitional Manipulation Problems in Voting}

\author{Palash Dey, Neeldhara Misra, and Y. Narahari\\ \texttt{palash@csa.iisc.ernet.in, mail@neeldhara.com, hari@csa.iisc.ernet.in}}

\institute{Department of Computer Science and Automation \\Indian Institute of Science - Bangalore, India.\\[10pt]Date: \today}

\maketitle
\pagestyle{fancy}


\begin{abstract}
In the \textsc{Possible Winner} problem in computational social
choice theory, we are given a set of partial preferences and the question
is whether a distinguished candidate could be made winner by extending the
partial preferences to linear preferences. Previous work has provided, for many
common voting rules, fixed parameter tractable algorithms 
for the \textsc{Possible Winner} problem, with
number of candidates as the parameter. 
However, the corresponding kernelization question is still
open and in fact, has been mentioned as a key research challenge~\citep{ninechallenges}. 
In this paper, we settle this open question for many common voting rules. 

We show that the \textsc{Possible Winner} problem for
maximin, Copeland, Bucklin, ranked pairs, and a class of scoring rules that
include the Borda voting rule do not admit a polynomial kernel with the number
of candidates as the parameter. We show however that the \textsc{Coalitional Manipulation}
problem which is an important special case of the \textsc{Possible Winner} problem
does admit a polynomial kernel for maximin, Copeland, ranked pairs, and a class of
scoring rules that includes the Borda voting rule, when the number of
manipulators is polynomial in the number of candidates. 
A significant conclusion of our work is that the \textsc{Possible Winner} problem 
is harder than the \textsc{Coalitional Manipulation} problem since the \textsc{Coalitional Manipulation} 
problem admits a polynomial kernel whereas the \textsc{Possible Winner} problem does not admit 
a polynomial kernel. 
\end{abstract}

\keywords{Computational social choice, possible winner, voting, kernelization, parameterized complexity}

\newpage

\section{Introduction}
\noindent In many real life situations including multiagent systems, agents often need to aggregate 
their preferences and agree upon a common decision (candidate). 
Voting is an immediate natural tool in these 
situations. Common and classical applications of voting rules in artificial intelligence 
include collaborative filtering \citep{pennock2000social}, 
planning among multiple automated agents \citep{ephrati1991clarke}, etc.

Usually, in a voting setting, it is assumed that the votes are complete orders over the
candidates. However, due to many reasons, for example, lack of knowledge of voters 
about some candidates, a voter may be indifferent between some pairs of candidates. 
Hence, it is both natural and important to 
consider scenarios where votes are partial orders over the candidates. When votes are only partial orders over the candidates, the winner cannot be determined with certainty since it depends on 
how these partial orders are extended to linear orders. 
This leads to a natural computational problem called the \textsc{Possible Winner}~\citep{konczak2005voting} problem:  given a set of partial votes $P$ and a distinguished candidate $c$, is there a way to extend the partial votes to linear ones to make $c$ win?
The \textsc{Possible Winner} problem has been studied extensively in the literature 
\citep{lang2007winner,pini2007incompleteness,walsh2007uncertainty,xia2008determining,betzler2009multivariate,betzler2009towards,chevaleyre2010possible,betzler2010partial,baumeister2011computational,lang2012winner,faliszewski2014complexity} following its definition in \citep{konczak2005voting}.
The \textsc{Possible Winner} problem is known to be in $\NPCb$ for many common voting rules, for example, scoring rules, maximin, 
Copeland, Bucklin, and ranked pairs etc.~\citep{xia2008determining}. Walsh \citep{walsh2007uncertainty} showed, for a constant number of candidates, that the \textsc{Possible Winner} problem can be solved in polynomial time for all
the voting rules mentioned above. 
An important special case of the \textsc{Possible Winner} problem is the 
\textsc{Coalitional Manipulation} problem \citep{bartholdi1989computational} 
where only two kinds of partial votes 
are allowed - complete preference and empty preference. The set of empty votes 
is called the manipulators' vote and is denoted by $M$. The \textsc{Coalitional Manipulation} problem 
is in $\NPCb$ for maximin, Copeland, and ranked pairs voting rules even when $|M|\ge 2$ 
\citep{faliszewski2008copeland,faliszewski2010manipulation,xia2009complexity}. 
The \textsc{Coalitional Manipulation} problem is in $\Pb$ for the Bucklin voting rule 
\citep{xia2009complexity}. We refer to \citep{xia2008determining,walsh2007uncertainty,xia2009complexity} 
for detailed overviews.

\subsection{Our Methodology} Preprocessing, as a strategy for coping with hard problems, is 
universally applied in practice. The main goal here is \emph{instance compression} - the objective is to output a smaller instance while maintaining equivalence. In the classical setting, $\mathbb{NP}$-hard problems are unlikely to have efficient compression algorithms (since repeated application would lead to an efficient solution for the entire problem, which is unexpected). However, the breakthrough notion of \emph{kernelization} in parameterized complexity provides a mathematical framework for analyzing the quality of preprocessing strategies. In parameterized 
complexity, each problem instance $(x,k)$ comes 
with a parameter $k$. 
The parameterized problem is said to admit a \emph{kernel} if there is a 
polynomial time algorithm (where the degree of polynomial is independent of $k$), 
called a \emph{kernelization algorithm}, 
that reduces the input instance to an instance with size bounded by 
a function of $k$, while preserving the answer. 
This has turned out to be an important and widely applied notion in theory, and has also proven very successful in practice~\citep{weihe1998covering,lokshtanov2012kernelization}. Quantitatively, running a kernelization algorithm before 
solving it using an algorithm that runs in time $f(|x|)$ brings down the running time to $f(k)+p(|x|)$, where $|x|$ is the 
size of the input instance and the running time of the kernelization algorithm is $p(|x|)$.

A problem with parameter $k$ is called \emph{fixed parameter tractable} (FPT) if it is solvable in 
time $f(k) \cdot p(|x|)$, where $f$ is an arbitrary function of $k$ and 
$p$ is a polynomial in the input size $|x|$. 
The existence of a fixed parameter tractable algorithm implies existence of a kernel for that problem. However, 
the size of the kernel need not be polynomial in the parameter. 
A polynomial kernel is said to exist if there 
is a kernelization algorithm that can output an equivalent problem instance of size polynomial in the 
parameter. We refer to \citep{downey1999parameterized,niedermeier2006invitation} 
for an excellent overview on  fixed parameter algorithms and kernelization. 

\subsection{Contributions} Discovering kernelization algorithms is currently an active and interesting area of 
research in computational social choice theory~\citep{betzler2010partial,betzler2010problem,betzler2009fixed,bredereck2012multivariate,froese2013parameterized,bredereck2014prices,Betzlerxyz,dorn2012multivariate}. 
Betzler et al. \citep{betzler2009multivariate} showed that the 
\textsc{Possible Winner} problem has a kernelization algorithm when parameterized 
by the total number of candidates for scoring rules, maximin, Copeland, Bucklin, and ranked 
pairs voting rules. A natural and practical follow-up question is whether the 
problem admits a polynomial kernel when parameterized by the number of candidates. This question has been open ever since
the work of Betzler et al. and in fact, has been mentioned as a key research challenge in parameterized algorithms for computational social choice theory~\citep{ninechallenges}. Betzler et al. showed non-existence of polynomial kernel for the \textsc{Possible Winner} problem for the $k$-approval voting rule when parameterized by $(t,k)$, where $t$ is the number of partial votes~\citep{betzler2010problem}. 
The $\NPCb$ reductions for the \textsc{Possible Winner} 
problem for scoring rules, maximin, Copeland, Bucklin, and ranked 
pairs voting rules given by Xia et al.~\citep{xia2009complexity} are from the \textsc{Exact 3 Set-Cover} 
problem. Their results do not throw any light on the existence of a polynomial kernel since \textsc{Exact 3 Set-Cover} has a trivial $O(m^3)$ kernel where $m$ is the size of the universe. In our work in this paper, 
we show that there is no polynomial kernel for the 
\textsc{Possible Winner} problem for maximin, Copeland, Bucklin, ranked pairs, and a class of scoring rules that include the Borda voting rule, 
when parameterized by the total number of candidates unless~\caveat{}. These hardness 
results are shown by a parameter-preserving many-to-one reduction from the 
\textsc{Small Universe Set Cover} problem for which there does not exist any polynomial kernel 
parameterized by universe size unless~\caveat{} \citep{DomLokSau2009}. 

On the other hand, we show that 
the \textsc{Coalitional Manipulation} problem admits a polynomial kernel for maximin, 
Copeland, ranked pairs, and a class of scoring rules that includes the Borda voting rule when we have $O(poly(m))$ number of manipulators -- 
specifically, we exhibit  an $O(m^2|M|)$ kernel for maximin and Copeland, and an $O(m^4|M|)$ kernel for the ranked pairs voting rule, where $m$ is the number of candidates and $M$ is the set of manipulators. 
The \textsc{Coalitional Manipulation} problem for the Bucklin voting rule is 
in $\Pb$ \citep{xia2009complexity} and thus the kernelization question does not arise. 

A significant conclusion of our work is that, although the \textsc{Possible Winner} and 
\textsc{Coalitional Manipulation} problems are both $\NPCb$, the \textsc{Possible Winner} problem 
is harder than the \textsc{Coalitional Manipulation} problem since the \textsc{Coalitional Manipulation} 
problem admits a polynomial kernel whereas the \textsc{Possible Winner} problem does not admit 
a polynomial kernel. 

The rest of the paper is organized as follows. We first establish the setup and general notions. 
Next, we discuss non-existence of  polynomial kernels for the \textsc{Possible Winner} problem followed by 
existence of polynomial kernels for the \textsc{Coalitional Manipulation} problem.

This paper is a significant extension of the conference version of this work~\cite{deykernel}: this extended version includes all the proofs.

\section{Preliminaries}

Let $\mathcal{V}=\{v_1, \dots, v_n\}$ be the set of all 
\emph{voters} and $\mathcal{C}=\{c_1, \dots, c_m\}$ 
the set of all \emph{candidates}. From, here on, $n$ is the number of voters and $m$ is the 
number of candidates unless mentioned otherwise. 
Each voter $v_i$'s \textit{vote} is a \emph{preference} $\succ_i$ over the 
candidates which is a linear order over $\mathcal{C}$. 
For example, for two candidates $a$ and $b$, $a \succ_i b$ means that the voter $v_i$ prefers $a$ to $b$. 
We will use $a >_i b$ to denote the fact that $ a \succ_i b, a \ne b$. 
We denote the set of all linear orders over $\mathcal{C}$ by $\mathcal{L(C)}$. 
The set of all preference profiles $(\succ_1, \dots, \succ_n)$ of 
the $n$ voters is denoted by $\mathcal{L(C)}^n$. 
We use $x\% y$ to denote the modulus operation $x \text{ mod } y$ (for example, $5\%2$ is the same as $1$). 
We denote the set $\{1, 2, \dots\}$ by $\mathbb{N}^+$ and  the set $\{1, \cdots, k\}$ by $[k]$. 
Let $\uplus$ denote the disjoint union of sets. 
A map $r:\uplus_{n,|\mathcal{C}|\in\mathbb{N}^+}\mathcal{L(C)}^n\longrightarrow 2^\mathcal{C}\setminus\{\emptyset\}$
is called a \emph{voting rule}. For a voting rule $r$ and a preference profile $\succ$, we say a candidate $x$ wins uniquely if $r(\succ) = \{x\}$. 
In this paper, unless mentioned otherwise, winning means winning uniquely. 
A more general setting is an {\em election\/} where the votes are only 
\emph{partial orders} over candidates. A \emph{partial order} is a relation that is \emph{reflexive, 
antisymmetric}, and \emph{transitive}. A partial vote can be extended to possibly more than one linear votes depending on how 
we fix the order for the unspecified pairs of candidates. For example, in an election with the set of candidates $\mathcal{C} = \{a, b, c\}$, 
a valid partial vote can be $a \succ b$. This partial vote can be extended to three linear votes namely, $a \succ b \succ c$, $a \succ c \succ b$, 
$c \succ a \succ b$. However, the voting rules always take as input a set of votes 
that are complete orders over the candidates. 
Given an election $E$, we can construct a weighted graph $G_E$ called \textit{weighted majority graph} from $E$. The set of vertices in $G_E$ is the set of candidates in $E$. For any two candidates $x$ and $y$, the weight on the edge $(x,y)$ is $D_E(x,y) = N_E(x,y) - N_E(y,x)$, where $N_E(x,y)(N_E(y,x))$ is the number of voters who prefer $x$ to $y$ ($y$ to $x$). The following voting rules use weighted majority graph to select winner.
Some examples of common voting rules are as follows.

\begin{itemize}
 \item {\bf Positional scoring rules:} Given an $m$-dimensional vector $\vec{\alpha}=\left(\alpha_1,\alpha_2,\dots,\alpha_m\right)\in\mathbb{R}^m$ 
 with $\alpha_1\ge\alpha_2\ge\dots\ge\alpha_m$, we can naturally define a
 voting rule - a candidate gets score $\alpha_i$ from a vote if it is placed at the $i^{th}$ position, and the 
 score of a candidate is the sum of the scores it receives from all the votes. 
 The winners are the candidates with maximum score. Scoring rules remain unchanged if we multiply every $\alpha_i$ by any constant $\lambda>0$ and/or add any constant $\mu$. Hence, we assume without loss of generality that for any score vector $\vec{\alpha}$, there exists a $j$ such that $\alpha_j - \alpha_{j+1}=1$ and $\alpha_k = 0$ for all $k>j$. We call such an $\vec{\alpha}$ a normalized score vector. A scoring rule is called a 
 strict scoring rule if $\alpha_1>\alpha_2>\dots>\alpha_m$.
 For $\alpha=\left(m-1,m-2,\dots,1,0\right)$, we get the \emph{Borda} voting rule. With $\alpha_i=1$ $\forall i\le k$ and $0$ else, 
 the voting rule we get is known as $k$-\emph{approval}. \emph{Plurality} is $1$-\emph{approval} and \emph{veto} is $(m-1)$-\emph{approval}.
 
 \item {\bf Bucklin:} A candidate $x$'s Bucklin score is the minimum number $l$ such that more than half 
 of the voters rank $x$ in their top $l$ positions. The winners are the candidates with lowest Bucklin score.
 
 \item {\bf Maximin:} The maximin score of a candidate $x$ is $\min_{y\ne x} D(x,y)$. The winners are the candidates with maximum maximin score.
 
 \item {\bf Copeland:} The Copeland score of a candidate $x$ is the number of candidates $y\ne x$ such that $D(x,y)>0$. The winners are the candidates with maximum Copeland score.
 
 \item {\bf Ranked pairs:} We pick the pair $(c_i, c_j) \in \CC \times \CC$ such that $D(c_i,c_j)$ is maximum. 
 We fix the ordering between $c_i$ and $c_j$ to be 
 $c_i \succ c_j$ unless it contradicts previously fixed orders. We continue this process 
 until all pairwise elections are considered. At this point, we have a complete order over the candidates. 
 Now, the top candidate is chosen as the winner.
\end{itemize}

We use the parallel-universes tie breaking~\citep{conitzer2009preference,brill2012price} 
to define the winning candidate for
the ranked pairs voting rule. In this setting, 
a candidate $c$ is a winner if and only if there exists a way to 
break ties in all of the steps such that $c$ is the winner. 
We say that a candidate is the unique winner of a ranked pairs election if the candidate is the 
sole winner in all possible tie breakings in all the steps. 


We now briefly describe the framework in which we analyze the computational complexity of the 
\textsc{Possible Winner} problem.


\paragraph*{Parameterized Complexity.}~A parameterized problem $\Pi$ is a 
subset of $\Gamma^{*}\times
\mathbb{N}$, where $\Gamma$ is a finite alphabet. An instance of a
parameterized problem is a tuple $(x,k)$, where $k$ is the
parameter. A {\em kernelization} algorithm is a set of preprocessing rules that runs in 
polynomial time and reduces the instance size with a guarantee on the output instance size. This notion is 
formalized below.
\begin{definition}{\rm \bf[Kernelization]}~\citep{niedermeier2006invitation,flum2006parameterized}
A kernelization algorithm for a parameterized problem   $\Pi\subseteq \Gamma^{*}\times \mathbb{N}$ is an 
algorithm that, given $(x,k)\in \Gamma^{*}\times \mathbb{N} $, outputs, in time polynomial in $|x|+k$, a pair 
$(x',k')\in \Gamma^{*}\times
  \mathbb{N}$ such that (a) $(x,k)\in \Pi$ if and only if
  $(x',k')\in \Pi$ and (b) $|x'|,k'\leq g(k)$, where $g$ is some
  computable function.  The output instance $x'$ is called the
  kernel, and the function $g$ is referred to as the size of the
  kernel. If $g(k)=k^{O(1)}$ then we say that
  $\Pi$ admits a polynomial kernel.
\end{definition}

For many parameterized problems, it is well established that the existence of a polynomial kernel would 
imply the collapse of the polynomial hierarchy to the third level (or more precisely, \caveat{}). 
Therefore, it is considered unlikely that these problems would admit polynomial-sized kernels. 
For showing kernel lower bounds, we simply establish reductions from these problems. 

\begin{definition}{\rm \bf[Polynomial Parameter Transformation]}
\label{def:ppt-reduction} {\rm\citep{BodlaenderThomasseYeo2009}}
Let $\Gamma_1$ and $\Gamma_2$ be parameterized problems. We say that $\Gamma_1$ is
polynomial time and parameter reducible to $\Gamma_2$, written
$\Gamma_1\le_{Ptp} \Gamma_2$, if there exists a polynomial time computable
function $f:\Sigma^{*}\times\mathbb{N}\to\Sigma^{*}\times\mathbb{N}$, and a
polynomial $p:\mathbb{N}\to\mathbb{N}$, and for all
$x\in\Sigma^{*}$ and $k\in\mathbb{N}$, if
$f\left(\left(x,k\right)\right)=\left(x',k'\right)$, then
$\left(x,k\right)\in \Gamma_1$ if and only if $\left(x',k'\right)\in \Gamma_2$,
and $k'\le p\left(k\right)$. We call $f$ a polynomial parameter
transformation (or a PPT) from $\Gamma_1$ to $\Gamma_2$.
\end{definition}

This notion of a reduction is useful in showing kernel lower bounds
because of the following theorem.

\begin{theorem}\label{thm:ppt-reduction}~{\rm\cite[Theorem 3]{BodlaenderThomasseYeo2009}}
  Let $P$ and $Q$ be parameterized problems whose derived
  classical problems are $P^{c},Q^{c}$, respectively. Let $P^{c}$
  be $\NPCb{}$, and $Q^{c}\in$ $\mathbb{NP}$. Suppose there exists a PPT from
  $P$ to $Q$.  Then, if $Q$ has a polynomial kernel, then $P$ also
  has a polynomial kernel. \end{theorem}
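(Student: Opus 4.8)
The plan is to compose three ingredients into a single kernelization algorithm for $P$: the hypothesized PPT from $P$ to $Q$, the assumed polynomial kernel for $Q$, and a classical many-one reduction that transports a small instance of $Q$ back into the world of $P$.

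First I would take an arbitrary instance $(x,k)$ of $P$ and apply the PPT $f$ to obtain $(x',k')=f((x,k))$ in polynomial time, so that $k'\le p(k)$ and $(x,k)\in P \iff (x',k')\in Q$. Next I would feed $(x',k')$ to the assumed kernelization algorithm for $Q$, obtaining an equivalent instance $(y,\ell)$ of $Q$ whose total size $|y|+\ell$ is at most $q(k')$ for some polynomial $q$. Since $k'\le p(k)$, this bound is $q(p(k))$, which is polynomial in $k$. Chaining the two equivalences gives $(x,k)\in P \iff (y,\ell)\in Q$, so at this point I hold an instance of $Q$ of size polynomial in $k$ that is equivalent to the original.

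The difficulty is that a kernel for $P$ is required to output an instance of $P$, not of $Q$, and this is exactly where the classical-complexity hypotheses come into play. I would first encode $(y,\ell)$ as a string $s$ of the derived classical problem $Q^c$, so that $(y,\ell)\in Q \iff s\in Q^c$ and $|s|$ is polynomial in $k$. Because $Q^c\in\mathbb{NP}$ while $P^c$ is \NPCb{} (hence $\mathbb{NP}$-hard), there is a polynomial-time many-one reduction $\Phi$ from $Q^c$ to $P^c$; applying it yields $t=\Phi(s)$ with $t\in P^c \iff s\in Q^c$ and $|t|$ polynomial in $|s|$, and therefore polynomial in $k$. Finally I would decode $t$ into a parameterized instance of $P$: adopting the standard convention that the derived classical problem writes its parameter in unary, every $t\in P^c$ corresponds to some parameterized instance $(x^*,k^*)$ with $k^*\le|t|$, so when $t$ is well-formed I output $(x^*,k^*)$, and when $t$ cannot belong to $P^c$ I output a fixed trivial \NO-instance of $P$. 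In both cases $(x,k)\in P$ holds if and only if the emitted instance is a \YES-instance of $P$, and both its size and its parameter are bounded by $|t|$, which is polynomial in $k$.

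The hard part will be this last transport step, where I must guarantee that passing from the classical string $t$ back to a parameterized instance of $P$ does not blow up the parameter. The unary encoding of the parameter in the derived classical problem is precisely what controls this, forcing $k^*\le|t|$; malformed strings are harmless because they are automatically non-members of $P^c$ and can be sent to a trivial \NO-instance. The $\mathbb{NP}$-hardness of $P^c$ (implied by its being \NPCb), together with $Q^c\in\mathbb{NP}$, is exactly what provides the reduction $\Phi$ in the direction we need. Since the PPT, the kernelization for $Q$, the encoding, the reduction $\Phi$, and the decoding are each polynomial time and produce intermediate outputs of size polynomial in $k$, the composite map runs in polynomial time and outputs an equivalent instance of $P$ of size polynomial in $k$, i.e.\ a polynomial kernel for $P$.
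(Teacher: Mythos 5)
Your proposal is correct. The paper does not prove this theorem itself --- it imports it from Bodlaender, Thomass\'e, and Yeo --- and your argument is essentially the standard proof from that reference: compose the PPT with the kernelization for $Q$, then use $Q^c\in\mathbb{NP}$ together with the $\mathbb{NP}$-hardness of $P^c$ to pull the small instance of $Q$ back to a small instance of $P$ via a classical many-one reduction. You also correctly identify and handle the one genuinely delicate point, namely that the unary encoding of the parameter in the derived classical problem is what keeps the parameter of the final instance bounded by the (polynomial-in-$k$) length of the reduced string.
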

  
\section{Hardness of Kernelization for Possible Winner Problem}

In this section, we show non-existence of polynomial kernels for the \textsc{Possible 
Winner} problem for maximin, Copeland, Bucklin, ranked Pairs, and a class of scoring rules that includes the Borda voting rule. 
We do this by demonstrating polynomial parameter transformations from the \textsc{Small Universe Set Cover} problem, which 
is the classic \textsc{Set Cover} problem, but now parameterized by the size of the universe and the budget.
\defparproblem{Small Universe Set Cover}
{
A set $\UU = \{u_1,\ldots,u_m\}$ and a family $\FF = \{S_1, \ldots, S_t\}$. 
}
{$m+k$}
{
Is there a subfamily $\HH \subseteq \FF$ of size at most $k$ such that every element of the universe belongs to 
at least one $H \in \HH$?}

It is well-known~\citep{DomLokSau2009} that \textsc{Red-Blue Dominating Set} parameterized by $k$ and the number 
of non-terminals does not admit a polynomial kernel unless~\caveat{}. It follows, by the  
duality between dominating set and set cover, that \textsc{Set Cover} when parameterized by the solution size 
and the size of the universe (in other words, the \textsc{Small Universe Set Cover} problem defined above) does 
not admit a polynomial kernel unless~\caveat{}. 

We now consider the \textsc{Possible Winner} problem parameterized by the number of candidates for maximin, 
Copeland, Bucklin, ranked pairs, and a class of scoring rules that includes the Borda rule,
and establish that they do not admit a polynomial kernel 
unless~\caveat{}, by polynomial parameter transformations from \textsc{Small Universe Set Cover}. 

\subsection{Scoring Rules}

First, we prove hardness of kernelization for the \textsc{Possible Winner} problem for a large class of scoring 
rules that includes the Borda rule. For that, we use the following lemma which has been used before~\citep{baumeister2011computational}.

\begin{lemma}\label{score_gen}
Let $\mathcal{C} = \{c_1, \ldots, c_m\} \uplus D, (|D|>0)$ be a set of candidates, and $\vec{\alpha}$ a normalized score vector of length $|\mathcal{C}|$. Then, for any given $\mathbf{X} = (X_1, \ldots, X_m) \in \mathbb{Z}^m$, there exists $\lambda\in \mathbb{R}$ and a voting profile such that the $\vec{\alpha}$-score of $c_i$ is $\lambda + X_i$ for all $1\le i\le m$,  and the score of candidates $d\in D$ is less than $\lambda$. Moreover, the number of votes is $O(poly(|\mathcal{C}|\cdot \sum_{i=1}^m |X_i|))$.
\end{lemma}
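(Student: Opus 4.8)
The plan is to realize the target scores by reducing the problem to creating prescribed \emph{differences} among the scores of $c_1,\dots,c_m$, and then to produce these differences one unit at a time using the single structural handle that normalization provides: there is an index $j$ with $\alpha_j-\alpha_{j+1}=1$ (and $\alpha_k=0$ for all $k>j$). The crucial observation is that if two candidates occupy the adjacent positions $j$ and $j+1$ in some vote, then transposing them increases the score of the candidate moved up by exactly $1$, decreases the score of the candidate moved down by exactly $1$, and leaves the scores of all other candidates untouched. This gives exact, local, unit control over score differences, and it works for an arbitrary normalized $\vec{\alpha}$ rather than only for Borda, which is what makes the argument uniform across the whole class of scoring rules.

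First I would normalize the target vector: replacing $X_i$ by $X_i-\min_k X_k$ only changes the common offset $\lambda$, so I may assume $X_i\ge 0$ for all $i$ and set $N=\sum_{i=1}^m X_i$. I would then build a \emph{base profile} as a union of cyclic rotations of a fixed linear order in which $c_1,\dots,c_m$ cycle through a common set of positions (including position $j$) while the dummy candidates of $D$ are parked in the zero-score tail, i.e.\ positions larger than $j$. Summed over a full set of rotations, every $c_i$ receives the same positive baseline score $\lambda$, while every $d\in D$ receives score $0<\lambda$. On top of this base profile I would, for each $i$ and for each of the $X_i$ units required, insert a vote in which $c_i$ sits at position $j$ and some dummy sits at position $j+1$, in place of the canonical base vote in which these two positions are occupied the other way around. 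By the observation above each such insertion adds exactly $1$ to the score of $c_i$ and removes $1$ from a dummy, leaving every other $c_{i'}$ unchanged; after all $N$ insertions the score of $c_i$ is exactly $\lambda+X_i$.

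It then remains to argue that the dummies stay strictly below $\lambda$ and to bound the number of votes. The dummies only ever lose points relative to their baseline of $0$, so they remain below $\lambda$; should a particular $\vec{\alpha}$ make this separation tight, a final uniform balanced block that contributes equally to all of $c_1,\dots,c_m$ can be appended to widen the gap while preserving every difference, at the cost of only $O(\mathrm{poly}(m))$ additional votes. Counting, the base profile and the balancing block use $O(\mathrm{poly}(|\mathcal{C}|))$ votes and the increments use $O(N)$ votes, for a total of $O(\mathrm{poly}(|\mathcal{C}|\cdot\sum_{i=1}^m |X_i|))$. I expect the main obstacle to be engineering the base profile so that all of $c_1,\dots,c_m$ are \emph{exactly} equalized at the baseline while the dummies simultaneously form the strict minimum, for an arbitrary location $j$ of the unit gap; once the rotation-based equalization is set up correctly, the transposition gadget handles the differences cleanly and the vote-count bound is routine.
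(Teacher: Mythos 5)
The paper never proves Lemma~\ref{score_gen}; it imports it from prior work (Baumeister et al.), so there is no in-paper argument to compare against. Your central mechanism --- a unit transposition across the normalized gap at positions $j$ and $j+1$, superimposed on rotation blocks that equalize $c_1,\dots,c_m$ --- is indeed the standard route to this statement, and your reduction to nonnegative $X_i$ and the vote-count accounting are fine.

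The construction as you describe it, however, has a genuine gap, and it sits exactly where you predicted it would. You require simultaneously that (a) the dummies are ``parked in the zero-score tail, i.e.\ positions larger than $j$,'' and (b) each increment replaces a base vote in which a dummy occupies position $j$ adjacent to $c_i$ at position $j+1$. These are incompatible: if dummies never appear at positions $\le j$ in the base profile, the base vote you propose to modify does not exist. Worse, (a) is not always realizable: normalization only guarantees $\alpha_k=0$ for $k>j$, and $j$ may exceed $m$ (for the veto-type vector $\vec{\alpha}=(1,\dots,1,0)$ one has $j=|\mathcal{C}|-1$, leaving a single zero-score position for $|D|$ dummies), in which case the $m$ candidates $c_1,\dots,c_m$ cannot cover positions $1,\dots,j$ by themselves and some dummies must take positive score. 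The repair needs two block types: \emph{separation} blocks in which $c_1,\dots,c_m$ rotate through positions $1,\dots,m$ while the dummies rotate among the tail positions $m+1,\dots,|\mathcal{C}|$ (strict separation then follows because not all $\alpha_k$ are equal, so the average tail score is strictly below the average head score); and \emph{increment} blocks consisting of all $|\mathcal{C}|$ cyclic shifts of an order in which a dummy immediately precedes $c_i$, with the one shift placing $c_i$ at position $j+1$ replaced by its $j/(j{+}1)$-transposition. The increment blocks hand every candidate, dummies included, a common baseline plus the intended $\pm 1$, so you must also add enough separation blocks to dominate the dummies' baseline contribution from the increment blocks. None of this changes the asymptotics, but without it the claim that every $d\in D$ stays strictly below $\lambda$, and the very definition of your increment operation, do not go through for an arbitrary normalized $\vec{\alpha}$.
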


With the above lemma at hand, we now show hardness of polynomial kernel result for the class of strict scoring rules.

\begin{theorem}
\label{thm:npk_borda}
The \textsc{Possible Winner} problem for any strict scoring rule, when parameterized by the number of candidates, does not admit a polynomial kernel unless~\caveat{}.
\end{theorem}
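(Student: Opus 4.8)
The plan is to invoke Theorem~\ref{thm:ppt-reduction}: I would give a polynomial parameter transformation from \textsc{Small Universe Set Cover}, parameterized by $m+k$, to the \textsc{Possible Winner} problem for the given strict scoring rule, parameterized by the number of candidates. Since the derived classical problem for \textsc{Small Universe Set Cover} is $\NPCb$ and has no polynomial kernel unless \caveat{}, while \textsc{Possible Winner} for a scoring rule is clearly in $\NPshort$, producing such a PPT immediately yields the claimed lower bound. The entire burden is therefore to construct, from an instance $(\UU, \FF, k)$ with $\UU = \{u_1, \ldots, u_m\}$ and $\FF = \{S_1, \ldots, S_t\}$, an equivalent \textsc{Possible Winner} instance whose \emph{number of candidates} is bounded by a polynomial in $m+k$. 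Crucially, the number of partial votes is allowed to grow with $t$; only the candidate count, which is the parameter, must stay small.

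For the construction I would use the candidate set $\CC = \{c\} \cup \{a_1, \ldots, a_m\} \cup \{w\} \cup D$, where $c$ is the distinguished candidate, $a_i$ represents the universe element $u_i$, $w$ is a ``budget'' candidate, and $D$ is a set of padding candidates; the total is $O(poly(m+k))$, independent of $t$. I would first invoke Lemma~\ref{score_gen} to produce a block of complete votes that fixes the base $\vec\alpha$-scores of $c, a_1, \ldots, a_m, w$ to prescribed values (with every $d\in D$ strictly below), chosen so that $c$ sits marginally below each $a_i$ and so that the threshold for $w$ encodes the budget $k$. I would then add, for each set $S_j \in \FF$, a single partial vote $P_j$ whose two intended extensions correspond to ``select $S_j$'' and ``do not select $S_j$'': the selecting extension transfers score away from every $a_i$ with $u_i \in S_j$ (pushing those candidates below $c$) while raising $w$ by a fixed amount, whereas the non-selecting extension does neither. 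With the base scores tuned via the Lemma, $c$ then beats every $a_i$ exactly when each $u_i$ is covered by some selected set, and $c$ beats $w$ exactly when at most $k$ sets are selected; hence $c$ is a possible winner iff $(\UU, \FF, k)$ admits a cover of size at most $k$.

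The main obstacle, and the reason the gadget needs care, is designing $P_j$ so that it behaves correctly for an \emph{arbitrary} strict scoring vector $\vec\alpha$. Unlike the Borda rule, a general strict $\vec\alpha$ has unequal consecutive gaps $\alpha_q - \alpha_{q+1}$, so I must place the relevant candidates in controlled positions (using the padding candidates $D$ and the slack provided by Lemma~\ref{score_gen}) to guarantee that the transfers induced by $P_j$ have exactly the intended, uniform magnitude; strictness is what ensures every such gap is strictly positive, so that each intended move has a definite, nonzero effect. A second subtlety, which I expect to be the most delicate part of the correctness argument, is the reverse direction: I must verify that there is \emph{no} extension of the partial votes --- including ``non-canonical'' completions that are neither the pure select nor the pure do-not-select extension --- that makes $c$ win unless the selected sets genuinely form a cover of size at most $k$. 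Here I would again lean on strictness, arguing that any deviation from the two intended extensions only moves scores monotonically in a direction that cannot help $c$ without simultaneously incurring coverage or budget cost, so that every winning extension decodes into a valid set cover.

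Finally, I would discharge the bookkeeping demanded by the definition of a PPT: the candidate count is $O(poly(m+k))$ by construction, the reduction runs in polynomial time, and the total number of votes stays polynomial because Lemma~\ref{score_gen} uses $O(poly(|\CC| \cdot \sum_i |X_i|))$ votes for base-score offsets $X_i$ that are themselves polynomially bounded in $m+k$. Combined with Theorem~\ref{thm:ppt-reduction}, this completes the argument.
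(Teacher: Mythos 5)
Your overall strategy coincides with the paper's: a polynomial parameter transformation from \textsc{Small Universe Set Cover}, a candidate set consisting of one candidate per universe element, the distinguished candidate $c$, a budget candidate $w$, and padding; base scores fixed via Lemma~\ref{score_gen} so that $c$ is tied with each element-candidate and leads $w$ by roughly $k$ times the per-selection gain; and one partial vote per set whose two intended extensions encode ``select'' and ``do not select.'' The forward direction as you sketch it is exactly the paper's.

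The genuine gap is in the reverse direction, and the monotonicity argument you gesture at does not repair it. In any implementation of your gadget the freedom in the partial vote $P_j$ amounts to choosing where to insert $w$ into the block occupied by $S_j$ and the padding (all other relative orders being fixed), so an adversarial extension may insert $w$ \emph{partway} into that block. Such a partial selection demotes only a suffix of $S_j$'s elements, and it charges $w$'s budget strictly less than a full selection does, since for a general strict score vector the gaps $\delta_q = \alpha_q - \alpha_{q+1}$ are non-uniform and the gain from lifting $w$ to position $p$ shrinks as $p$ grows. The budget on $w$ therefore does not bound by $k$ the number of distinct sets contributing a demotion: the universe could in principle be covered by more than $k$ partially-selected sets while $w$ still trails $c$, so a winning extension need not decode to a cover of size at most $k$. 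The paper closes exactly this hole with an additional candidate $d$, placed at the head of every block and given score $(k-1)\delta_1$ above $c$'s; since $d$ loses score only when $w$ is lifted to the very first position, at least $k$ extensions must be full selections, and together with the cap on $w$'s score this pins the extensions to be all-or-nothing and exactly $k$ in number. You need $d$, or an equivalent device forcing selections to be all-or-nothing, for your decoding step to be sound.
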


\begin{proof}
 Let $(\UU,\FF,k)$ be an instance of \textsc{Small Universe Set Cover}, where $\UU = \{u_1,\ldots,u_m\}$ and $\FF = \{S_1, \ldots, S_t\}$. We use  $T_i$ to denote $\UU \setminus S_i$. We let $(\alpha_1, \alpha_2, \ldots, \alpha_t)$ denote the score vector of length $t$, and let $\delta_i$ denote the difference $(\alpha_{i} - \alpha_{i+1})$. Note that for a strict scoring rule, all the $\delta_i$'s will be strictly positive. We now construct an instance $(\CC,V,c)$ of \textsc{Possible Winner} as follows.  
 
\paragraph*{Candidates.} $\CC = \UU \uplus \VV \uplus \{w,c,d\}$, where $\VV := \{v_1,...,v_m\}$. 

\paragraph*{Partial Votes, $P$.}~The first part of the voting profile comprises $t$ partial votes, and will be denoted by $P_1$. Let $V_j$ denote the set $\{v_1, \ldots, v_j\}$. For each $1 \leq i \leq t$, we first consider a profile built on a total order $\eta_i$:
$$\eta_i := d \succ S_i \succ V_j \succ w \succ \cdots, \mbox{where } j = m - |S_i|.$$ 
Now, we obtain a partial order $\lambda_i$ based on $\eta_i$ as follows:
$$\lambda_i := \eta_i \setminus \left( \{w\} \times \left( \{d\} \uplus S_i \uplus V_j\right)\right)$$

Let $P_1^\prime$ be the set of votes $\{\eta_i ~|~ 1 \leq i \leq t\}$.  We will now add more votes, which we denote by $P_2$, such that 
$s(c) = s(u_i), s(d) - s(c) = (k -1)\delta_1, s(c) - s(w) = k(\delta_2 + \delta_3 + \cdots + \delta_{m+1}) + \delta_1, s(c) > s(v_i) + 1$, for all $1 \leq i \leq t$ where $s(\cdot)$ is the scores of the candidates from the combined voting profile $P_1^\prime \uplus P_2$. From the proof of Lemma\nobreakspace \ref {score_gen}, we can see that such votes can always be constructed. In particular, also note that the voting profile $P_2$ consists of complete votes. Note that the number of candidates is $2m+3$, which is polynomial in the size of the universe, as desired. We now claim that the candidate $c$ is a possible winner for the voting profile $P_1 \uplus P_2$ with respect to the strict scoring rule if, and only if, $(\UU,\FF)$ is a YES-instance of Set Cover. 

In the forward direction, suppose, without loss of generality, that $S_1, \ldots, S_k$ form a set cover. Then we propose the following extension for the partial votes $\lambda_1, \ldots, \lambda_k$:
$$w > d > S_i > V_j > ... ,$$

and the following extension for the partial votes $\lambda_{k+1}, \ldots, \lambda_t$:
$$d > S_i > V_j > w > ... $$

For $1 \leq i \leq k$, the position of $d$ in the extension of $\lambda_i$ proposed above is $\delta_1$ lower than its original position in $\eta_i$. Therefore, the final score of $d$ is lower than the score of $c$. Similarly, the scores of all the $u_i$'s decrease by at least $\min_{i=2}^m \{\delta_i\}$, which is strictly positive, since the $S_i$'s together form a set cover, and the scoring rule is strict. Finally, the score of $w$ increase by at most $k(\delta_2 + \delta_3 + \cdots + \delta_{m+1})$, since there are at most $k$ votes where the position of $w$ in the extension of $\lambda_i$ improved from it's original position in $\eta_i$. Therefore, the score of $c$ is greater than any other candidate, implying that $c$ is a possible winner. 

For the reverse direction, notice that there must be at least $k$ extensions where $d$ is in the second position, since the score of $d$ is $(k-1)\delta_1$ more than the score of $c$. In these extensions, observe that $w$ will be at the first position. On the other hand, placing $w$ in the first position causes its score to increase by $(\delta_2 + \delta_3 + \cdots + \delta_{m+1})$, therefore, if $w$ is in first position in $\ell$ extensions, its score increases by $\ell(\delta_2 + \delta_3 + \cdots + \delta_{m+1})$. Since the score difference between $w$ and $c$ is only $k(\delta_2 + \delta_3 + \cdots + \delta_{m+1}) + 1$, we can afford to have $w$ in the first position in \emph{at most} $k$ votes. Therefore, apart from the extensions where $d$ is in the second position, in all remaining extensions, $w$ appears after $V_j$, and therefore the candidates from $S_i$ continue to be in their original positions. We now claim that the sets corresponding to the $k$ votes where $d$ is on top form a set cover.  Indeed, if not, suppose the element $u_i$ is not covered. It is easily checked that the score of such a $u_i$ remains unchanged in this extension, and therefore its score is equal to $c$, contradicting our assumption that we started with an extension for which $c$ was a winner. 
\qed\end{proof}

The proof of Theorem\nobreakspace \ref {thm:npk_borda} can be generalized to to a wider class of scoring rules as stated in the following corollary.
\begin{corollary}
 Let $r$ be a positional scoring rule such that there exists a polynomial function $f:\mathbb{N}\rightarrow \mathbb{N}$, such that for every $m\in \mathbb{N}$, there exists an index $l$ in the $f(m)$ length score vector $\vec{\alpha}$ satisfying following,
 $$ \alpha_i - \alpha_{i+1} > 0, \forall l\le i\le l+m $$
 Then, the \textsc{Possible Winner} problem for $r$, when parameterized by the number of candidates, does not admit a polynomial kernel unless~\caveat{}.
\end{corollary}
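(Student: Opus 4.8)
The plan is to observe that the proof of Theorem~\ref{thm:npk_borda} never uses strictness of the entire score vector: the scoring rule enters only through the differences $\delta_1, \ldots, \delta_{m+1}$ between the top $m+2$ positions, and all that is needed is that these be strictly positive so that moving $w$ upward past $V_j$, $S_i$, and $d$ in the extensions of $\lambda_i$ changes scores in the intended direction. The hypothesis of the corollary guarantees exactly such a run of strictly positive differences, namely $\alpha_l - \alpha_{l+1}, \ldots, \alpha_{l+m} - \alpha_{l+m+1}$, sitting somewhere inside a score vector of polynomial length $f(m)$. So the idea is to replay the reduction of Theorem~\ref{thm:npk_borda} verbatim, but with the entire gadget pushed down into this strict window.

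Concretely, given an instance $(\UU, \FF, k)$ with $|\UU| = m$, I would invoke the window hypothesis with parameter $M = 2m+1$, so that the strict run $\alpha_l - \alpha_{l+1}, \ldots, \alpha_{l+M} - \alpha_{l+M+1}$ spans $M+2 = 2m+3$ consecutive positions $[l, l+M+1]$ inside a vector of length $f(M) = \mathrm{poly}(m)$. I place all $2m+3$ real candidates $\UU \uplus \VV \uplus \{w,c,d\}$ into these positions, so that every pair of consecutive real candidates is separated by a strictly positive score difference — precisely the situation the proof of Theorem~\ref{thm:npk_borda} relies on. The total orders $\eta_i$, the partial orders $\lambda_i$, and the score targets defining $P_2$ are all reused, with each $\delta_i$ replaced by the corresponding window difference $\alpha_{l+i-1} - \alpha_{l+i}$. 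The positions $[1, l-1]$ above the window and those below it are filled with fresh dummy candidates so that the candidate count is exactly $f(M) = \mathrm{poly}(m)$, keeping the parameter polynomially bounded and making the map a candidate for a PPT from \textsc{Small Universe Set Cover}.

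The main obstacle is that the dummy candidates occupying the prefix positions $[1, l-1]$ of every vote accumulate large scores and could, a priori, beat $c$ and destroy the equivalence. The resolution is Lemma~\ref{score_gen}: the relations driving the gadget ($s(d) - s(c) = (k-1)\delta_l$, $s(c) - s(w) = \cdots$, $s(c) = s(u_i)$, $s(c) > s(v_i) + 1$) are all \emph{differences} among real candidates, hence invariant under adding a common value $\lambda$ to every real candidate. I would therefore use Lemma~\ref{score_gen} to build $P_2$ so that, on the combined profile, each real candidate has score $\lambda + X_i$ with $\lambda$ chosen larger than the (polynomially bounded) maximum possible dummy score, while every dummy remains below $\lambda$; the ``moreover'' clause of Lemma~\ref{score_gen} keeps the number of votes polynomial, since the $X_i$ and $\lambda$ are both polynomial in the input. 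This leaves the gadget analysis of Theorem~\ref{thm:npk_borda} untouched and simultaneously certifies that no dummy wins, so correctness in both directions follows exactly as before. As $f$ is polynomial, the construction is a polynomial parameter transformation, and Theorem~\ref{thm:ppt-reduction} then yields the claimed kernel lower bound unless~\caveat{}.
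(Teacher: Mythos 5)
Your proposal is correct and matches the paper's intended argument: the paper offers no separate proof of this corollary beyond the remark that the proof of Theorem~\ref{thm:npk_borda} generalizes, and embedding the $d \succ S_i \succ V_j \succ w$ gadget into the strict window guaranteed by the hypothesis, padding with dummies to reach $f(\cdot)$ candidates, and invoking Lemma~\ref{score_gen} to pull every real candidate's combined score above every dummy's is exactly that generalization. One small imprecision worth noting: the dummies' scores from $P_1$ are bounded by $t\alpha_1$, which is polynomial in the \emph{input size} but not in $m$, and the $\lambda$ of Lemma~\ref{score_gen} is existentially quantified rather than freely choosable --- both points are resolved, as your final clause in effect does, by shifting all the targets $X_i$ by a common additive constant of order $t\alpha_1$, which keeps $P_2$ polynomial in the input size (all that a polynomial parameter transformation requires) while the number of candidates, the only quantity that must be polynomial in the parameter, remains $f(\mathrm{poly}(m))$.
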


\subsection{Maximin Voting Rule}

We will need the following lemma in subsequent proofs. The lemma has been used before \citep{mcgarvey1953theorem,xia2008determining}.

\begin{lemma}\label{thm:mcgarvey}
 For any function $f:\mathcal{C} \times \mathcal{C} \longrightarrow \mathbb{Z}$, such that
 \begin{enumerate}
  \item $\forall a,b \in \mathcal{C}, f(a,b) = -f(b,a)$.
  \item either $\forall a,b \in \mathcal{C}, f(a,b)$ is even or 
  $\forall a,b \in \mathcal{C}, f(a,b)$ is odd.
 \end{enumerate}
 there exists a $n$ voters' profile such that for all $a,b \in \mathcal{C}$, $a$ defeats 
 $b$ with a margin of $f(a,b)$. Moreover, 
 \[n = O\left(\sum_{\{a,b\}\in \mathcal{C}\times\mathcal{C}} |f(a,b)|\right)\]
\end{lemma}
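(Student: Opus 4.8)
The plan is to realize the prescribed margins additively, assembling the voter profile as a disjoint union of tiny gadgets, each of which perturbs exactly one pairwise margin while leaving all others fixed; this is the classical argument of McGarvey, whose heart is a single-pair gadget. Fix an ordered pair $(a,b)$ and enumerate the remaining $m-2$ candidates arbitrarily as $x_1, \ldots, x_{m-2}$. I would use the two linear votes
$$ a \succ b \succ x_1 \succ \cdots \succ x_{m-2} \quad\text{and}\quad x_{m-2} \succ \cdots \succ x_1 \succ a \succ b. $$
Both votes rank $a$ above $b$, so the pair contributes $+2$ to $D(a,b)$. For every other ordered pair the two votes disagree: the block $x_1, \ldots, x_{m-2}$ appears in opposite orders, and each of $a,b$ precedes every $x_i$ in the first vote but follows every $x_i$ in the second. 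Hence every margin except $D(a,b)$ (and its antisymmetric partner) is left unchanged. Since margins are additive across disjoint sets of voters, stacking several such gadgets just adds up their individual effects.

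\emph{Even case.} Suppose all $f(a,b)$ are even. For each unordered pair $\{a,b\}$ with $f(a,b)>0$, I would include $f(a,b)/2$ copies of the gadget for $(a,b)$. By additivity and the antisymmetry of $f$, the resulting profile realizes $D(a,b)=f(a,b)$ for every pair. The number of votes is $\sum_{f(a,b)>0} 2\cdot(f(a,b)/2) = \sum_{\{a,b\}}|f(a,b)|$, matching the claimed bound.

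\emph{Odd case.} Each gadget changes a margin only by an even amount, so gadgets alone can never produce odd margins; this is the step I expect to be the genuine obstacle. To overcome it I would first lay down a single seed vote, say $c_1 \succ c_2 \succ \cdots \succ c_m$, which sets every margin to $\pm 1$, hence to an odd value. Writing $D_0$ for the margin induced by the seed vote, define the residual $g(a,b) := f(a,b) - D_0(a,b)$. Then $g$ is again antisymmetric, and each $g(a,b)$ is even (odd minus odd), so the even-case construction applies to $g$; appending those gadgets to the seed vote yields the target margins $f$. For the count, each $|g(a,b)| \le |f(a,b)|+1$, and since every $f(a,b)$ is odd we have $|f(a,b)|\ge 1$; thus the extra quadratic-in-$m$ term and the lone seed vote are both absorbed into $O(\sum_{\{a,b\}}|f(a,b)|)$.

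Finally, I would remark that the uniform-parity hypothesis is necessary, not merely convenient: for any profile of $n$ voters, $N(a,b)+N(b,a)=n$ for every pair, so $D(a,b)\equiv n \pmod 2$ and all realizable margins must share a common parity. This both explains the source of the obstacle in the odd case and confirms that the two cases above exhaust all admissible inputs.
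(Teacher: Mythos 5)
Your proof is correct and is precisely the classical McGarvey construction (two-vote gadget per pair, plus a seed vote to fix parity in the odd case) that the paper itself invokes by citation without reproducing a proof. The parity-necessity remark and the bookkeeping for the vote count are both sound, so there is nothing to fix.
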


We now describe the reduction for the \textsc{Possible Winner} parameterized by the number of candidates, for the maximin voting rule.

\begin{theorem}
\label{thm:npk_maximin}
The \textsc{Possible Winner} problem for maximin voting rule, when parameterized by the number of candidates, does not admit a polynomial kernel unless~\caveat{}.
\end{theorem}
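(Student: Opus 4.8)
The goal is to establish a polynomial parameter transformation (PPT) from \textsc{Small Universe Set Cover} to the \textsc{Possible Winner} problem for maximin, using the number of candidates as the parameter. Since \textsc{Small Universe Set Cover} has no polynomial kernel unless~\caveat{}, and the derived classical \textsc{Possible Winner} problem for maximin is in \NPshort{} (indeed \NPCb{} by~\citep{xia2009complexity}), Theorem~\ref{thm:ppt-reduction} will yield the claim once the transformation is exhibited with the candidate set of size polynomial in $m+k$. The plan is to start from an instance $(\UU, \FF, k)$ with $\UU = \{u_1, \ldots, u_m\}$ and $\FF = \{S_1, \ldots, S_t\}$, and build an election whose candidate set includes one candidate per universe element $u_i$, the distinguished candidate $c$, and a small number of auxiliary candidates — crucially keeping the total candidate count at $O(m)$, \emph{independent of $t$}, so that the parameter is preserved.

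\textbf{Key steps.}
First I would use Lemma~\ref{thm:mcgarvey} to fix, via a set of complete votes, a target weighted majority graph on all candidates so that the pairwise margins $D(\cdot,\cdot)$ take prescribed values; this lets me engineer $c$'s maximin score and the ``worst enemy'' structure of every other candidate to any desired even (or uniformly odd) values. Second, I would encode the family $\FF$ into $t$ partial votes, one per set $S_i$, whose unspecified pairs correspond exactly to the choice of whether $S_i$ is included in the cover. The intended completion of each partial vote should shift the relevant pairwise margins so that selecting $S_i$ helps every element $u_j \in S_i$ (raising the margin that $u_j$ loses by against $c$, or equivalently lowering $u_j$'s maximin score below $c$'s), while a non-selected set leaves those margins harmful to $c$. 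The budget $k$ should be enforced by calibrating $c$'s maximin score so that $c$ wins if and only if at most $k$ of the sets are ``activated'' and together they cover every $u_i$ — i.e. every $u_i$ must be helped by at least one activated set. Third, I would argue both directions: a set cover of size $\le k$ yields completions making $c$ the unique maximin winner, and conversely any winning completion must activate a collection of sets that covers $\UU$ within budget.

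\textbf{Main obstacle.}
The delicate point, as always with maximin reductions, is that the maximin score depends only on a candidate's \emph{minimum} pairwise margin, so the partial votes must be designed so that completing them to favor $c$ cannot inadvertently rescue some $u_i$ through an unrelated pairwise comparison, nor sabotage $c$'s own minimum margin. I expect the hard part to be coordinating the partial-vote gadget so that (a) the number of ``activating'' completions is controlled by a single budget-type quantity tied to $k$, and (b) each universe element $u_i$ has its critical (minimum-defining) comparison against $c$ governed precisely by whether some set containing $u_i$ is activated, with all other margins of $u_i$ kept safely large. Balancing these constraints simultaneously — while respecting the parity condition of Lemma~\ref{thm:mcgarvey} and keeping the auxiliary candidate count $O(1)$ beyond the $2m$-ish element/helper candidates — is the technical crux; once the gadget is correctly specified, the equivalence proof and the parameter bound $|\CC| = O(m)$ follow routinely.
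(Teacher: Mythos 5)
Your high-level strategy coincides with the paper's: a polynomial parameter transformation from \textsc{Small Universe Set Cover}, with Lemma~\ref{thm:mcgarvey} used to install prescribed pairwise margins via complete votes, one partial vote per set $S_i$, a candidate per universe element plus $O(m)$ auxiliaries, and the budget $k$ enforced through a calibrated margin threshold. However, the proposal stops exactly where the proof begins. You correctly identify the crux --- designing the partial-vote gadget so that a single ``activation'' decision per vote simultaneously (a) counts against the budget and (b) covers every element of $S_i$, without allowing completions that cherry-pick individual pairwise comparisons --- but you do not supply that gadget, and it is not routine. In particular, your sketch has each element $u_j$ threatened directly through its comparison with $c$; this creates an interference problem, since every reordering of $u_j$ versus $c$ perturbs both candidates' minimum margins at once, which is precisely the coupling one must avoid in a maximin reduction.

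The paper's construction resolves this with several ideas that are absent from your plan. Each element $u_i$ gets a companion $w_i$, and coverage is detected through the margin $D(w_i,u_i)=-2t$: if $u_i$ is never placed above $w_i$, the companion's maximin score climbs to $-t$ and ties $c$, whose own score is pinned at $-t$ by a separate three-cycle $l_1,l_2,l_3$ (so $c$'s minimum is never entangled with the element comparisons). The budget is enforced by a pair of shadow candidates $d$ and $x$ with margins $D(d,w_1)=D(x,w_x)=-2k-2$ against the block $\vec{W}$: more than $k$ activations lets $d$ catch up with $c$, and --- via the claim that any non-activated extension must respect $w_x\succ x$ --- a vote that does not pay the budget is forced to keep the entire block $\vec{W}$ ahead of $x\succ S_i$, hence ahead of all of $S_i$, so it cannot contribute any coverage. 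This last step is what rules out the ``partially activating'' completions you worry about, and without an analogous mechanism the reverse direction of your equivalence cannot be argued. As it stands, the proposal is a correct plan with the same architecture as the paper's proof, but the construction that carries the entire logical weight is missing.
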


\begin{proof}
 Let $(\UU,\FF,k)$ be an instance of \textsc{Small Universe Set Cover}, where $\UU = \{u_1,\ldots,u_m\}$ and $\FF = \{S_1, \ldots, S_t\}$. We use  $T_i$ to denote $\UU \setminus S_i$. We now construct an instance $(\CC,V,c)$ of the \textsc{Possible Winner} as follows. 

\paragraph*{Candidates.}~$C := \UU \uplus W \uplus \{c,d,x\} \uplus L$. Where $W := \{w_1, w_2, \ldots, w_m, w_x\}, L:= \{l_1,l_2,l_3\}$.

\paragraph*{Partial Votes, $P$.}~The first part of the voting profile comprises $t$ partial votes, and will be denoted by $P$. For each $1 \leq i \leq t$, we first consider a profile built on a total order $\eta_i$. We denote the order $w_1 \succ \dots \succ w_m \succ w_x$ 
by $\vec{W}$. From this point onwards, whenever we place a set of candidates in some position of a partial order, we mean that the candidates in the set can be ordered arbitrarily. For example, the candidates in $S_i$ can be ordered arbitrarily among themselves in the total order $\eta_i$ below.
\[\eta_i := L \succ \vec{W} \succ x \succ S_i \succ d \succ c \succ T_i\]
Now, we obtain a partial order $\lambda_i$ based on $\eta_i$ as follows:
\[\lambda_i := \eta_i \setminus \left(W \times \left(\{c,d,x\} \uplus \UU\right)\right) \]
The profile $P$ consists of $\{ \lambda_i ~|~ 1 \leq i \leq t \}$.

\paragraph*{Complete Votes, $Q$.}~We now describe the remaining votes in the profile, which are linear orders designed to achieve specific pairwise difference scores among the candidates. This profile, denoted by $Q$, is defined according to~Lemma\nobreakspace \ref {thm:mcgarvey} to contain votes such that the pairwise score differences of $P \cup Q$ satisfy the following.

\begin{itemize}
\item $D(c,w_1) = -2k$.
\item $D(c,l_1) = -t$.
\item $D(d,w_1) = -2k - 2$.
\item $D(x,w_x) = -2k - 2$.
\item $D(w_i,u_i) = -2t$ $\forall$ $1 \leq i \leq m$.
\item $D(a_i,l_1) = D(w_x,l_1) = -4t$.
\item $D(l_1,l_2) = D(l_2,l_3) = D(l_3,l_1) = -4t$.
\item $D(l,r) \leq 1$ for all other pairs $(l,r) \in C \times C$.
\end{itemize}
We note that the for all $c,c' \in \CC$, the difference $|D(c,c') - D_P(c,c')|$ is always even, as long as $t$ is even and the number of sets in $\FF$ that contain an element $u \in \UU$ is always even. Note that the latter can always be ensured without loss of generality: indeed, if $u \in \UU$ occurs in an odd number of sets, then we can always add the set $\{u\}$ if it is missing and remove it if it is present, flipping the parity in the process. In case $\{u\}$ is the only set containing the element $u$, then we remove the set from both $\FF$ and $\UU$ and decrease $k$ by one. The number of sets $t$ can be assumed to be even by adding a dummy element and a dummy pair of sets that contains the said element. It is easy to see that these modifications always preserve the instance. 
Thus, the constructed instance of \textsc{Possible Winner} is $(\CC,V,c)$, where $V := P \cup Q$. We now turn to the proof of correctness. 

In the forward direction, let $\HH \subseteq \FF$ be a set cover of size at most $k$. Without loss of generality, let $|\HH| = k$ (since a smaller set cover can always be extended artificially) and let $\HH = \{S_1, \ldots, S_k\}$ (by renaming). 

If $i \leq k$, let: 
\[\lambda_i^* := L \succ x \succ S_i \succ d \succ c \succ \vec{W} \succ T_i\]
If $i > k$, let: 
\[\lambda_i^* := L \succ \vec{W} \succ x \succ S_i \succ d \succ c \succ T_i\]
Clearly $\lambda_i^*$ extends $\lambda_i$ for all $i \in 1 \leq i \leq k$. Let $V^*$ denote the extended profile consisting of the votes $\{ \lambda_i^* ~|~ 1 \leq i \leq k\} \cup Q$. We now claim that $c$ is the unique winner with respect to maximin voting rule in $V^*$. 

Since there are $k$ votes in $V^*$ where $c$ is preferred over $w_1$ and $(t-k)$ votes where $w_1$ is preferred to $c$, we have:
\begin{eqnarray*}
D_{V^*}(c,w_1) &=& D_{V}(c,w_1) + k - (t-k)
\\ &=& -2k + k - (t-k) = -t
\end{eqnarray*}
It is easy to check that maximin score of $c$ is $-t$. Also, it is straightforward to verify the following 
table.

\begin{table}[here]
\begin{minipage}{\textwidth}
  \begin{center}
  {\renewcommand{\arraystretch}{2}
 \begin{tabular}{|c|c| }\hline
  Candidate	& maximin score	\\\hline
  $w_i, \forall i\in \{1, 2, \dots, m\}$		& $ < -t $	\\\hline
  $u_i, \forall i\in \{1, 2, \dots, m\}$		& $\leq -4t$	\\\hline 
  $w_x$		& $\leq -4t$	\\\hline 
  $l_1, l_2, l_3$		& $\leq -4t$	\\\hline
  $x$		& $\leq -t-2$	\\\hline 
  $d$		& $\leq -t-2$\\\hline
 \end{tabular}
 }
 \end{center}
\end{minipage}
\label{summary}
\end{table}

Therefore, $c$ is the unique winner for the profile $V^*$.

We now turn to the reverse direction. Let $P^*$ be an extension of $P$ such that $V^* := P^* \cup Q$ admits $c$ as a unique winner with respect to maximin voting rule. We first argue that $P^*$ must admit a certain structure, which will lead us to an almost self-evident set cover for $\UU$.

Let us denote by $P^*_C$ the set of votes in $P^*$ which are consistent with $c \succ w_1$, and let $P^*_W$ be the set of votes in $P^*$ which are consistent with $w_1 \succ c$. We first argue that $P^*_C$ has at most $k$ votes.
\begin{claim} Let $P^*_C$ be as defined above. Then, $|P^*_C| \leq k$.
\end{claim}

\begin{proof}
Suppose, for the sake of contradiction, that more than $k$ extensions are consistent with $c \succ w_1$. Then we have:
\begin{eqnarray*}
D_{V^*}(c,w_1) & \geq & D_{V}(c,w_1) + k+1 - (t-k-1)
\\ &=&-2k + 2k - t + 2 = -t+2
\end{eqnarray*}
Since $D_{V^*}(c,l_1) = -t$, the maximin score of $c$ is $-t$. On the other hand, we also have that the maximin score of $d$ is given by $D_{V^*}(d,w_1)$, which is now at least $(-t)$:
\begin{eqnarray*}
D_{V^*}(d,w_1) & \geq & D_{V}(d,w_1) + k+1 - (t-k-1)
\\ &=&-2k-2 + 2k - t + 2 = -t
\end{eqnarray*}
Therefore, $c$ is no longer the unique winner in $V^*$ with respect to the maximin voting rule, 
which is the desired contradiction. 
\end{proof}

Next, we propose that a vote that is consistent with $w_1 \succ c$ must be consistent with $w_x \succ x$.
\begin{claim} 
Let $P^*_W$ be as defined above. Then, any vote in $P^*_W$ must respect $w_x \succ x$.
\end{claim}

\begin{proof}
Suppose there are $r$ votes in $P^*_C$, and suppose that in at least one vote in $P^*_W$ where $x \succ w_x$. Notice that any vote in $P^*_C$ is consistent with $x \succ w_x$. Now we have:
\begin{eqnarray*}
D_{V^*}(c,w_1) & = & D_{V}(c,w_1) + r - (t-r)
\\ &=&-2k + 2r - t 
\\ &=&-t - 2(k-r)
\end{eqnarray*}
And further:
\begin{eqnarray*}
D_{V^*}(x,w_x) & \geq & D_{V}(x,w_x) + (r+1) - (t-r-1)
\\ &=&-2k-2 + 2r - t + 2 
\\ &=&-t - 2(k-r)
\end{eqnarray*}
It is easy to check that the maximin score of $c$ in $V^*$ is at most $-t - 2(k-r)$, witnessed by  $D_{V^*}(c,w_1)$, and the maximin score of $x$ is at least $-t - 2(k-r)$, witnessed by  $D_{V^*}(x,w_x)$. Therefore, $c$ is no longer the unique winner in $V^*$ with respect to the maximin voting rule, and we have a contradiction.
\end{proof}

We are now ready to describe a set cover of size at most $k$ for $\UU$ based on $V^*$. Define $J \subseteq [t]$ as being the set of all indices $i$ for which the extension of $\lambda_i$ in $V^*$ belongs to $P_C^*$. Consider: \[\HH := \{S_i ~|~ i \in J\}.\] The set $\HH$ is our proposed set cover. Clearly, $|\HH| \leq k$. It remains to be shown that $\HH$ is a set cover. 

Assume, for the sake of contradiction, that there is an element $u_i \in \UU$ that is not covered by $\HH$. This means that for all $i \in J$, $u_i \in T_i$, and in the corresponding extensions of $\lambda_i$ in $V^*$, implying that $w_i \succ u_i$.  Further, for all $i \notin J$, we have that the extension of $\lambda_i$ in $V^*$ is consistent with: 
\[w_1 \succ \cdots \succ w_i \succ \cdots \succ w_x \succ x \succ S_i \succ c \succ T_i,\]
implying again that $w_i \succ u_i$ in these votes. Therefore, we have:
\[D_{V^*}(w_i,u_i) = D_V(w_i,u_i) + k + (t-k) = -2t + t = -t.\]
We know that the maximin score of $c$ is less than or equal to $-t$, since $D_{V^*}(c,l_1) = -t$, and we now have that the maximin score of $w_i$ is $-t$. This overrules $c$ as the unique winner in $V^*$, contradicting our assumption to that effect. This completes the proof. 
\qed\end{proof}

\subsection{Copeland Voting Rule}

We now describe the reduction for  \textsc{Possible Winner} parameterized by the number of candidates, for the Copeland voting rule.

\begin{theorem}
\label{thm:npk_copeland}
The \textsc{Possible Winner} problem for the Copeland voting rule, when parameterized by the number of candidates, does not admit a polynomial kernel unless~\caveat{}.
\end{theorem}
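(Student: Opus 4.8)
The plan is to give a polynomial parameter transformation from \textsc{Small Universe Set Cover} to \textsc{Possible Winner} for Copeland, following the same template as Theorem~\ref{thm:npk_maximin}. Starting from an instance $(\UU,\FF,k)$ with $\UU=\{u_1,\dots,u_m\}$ and $\FF=\{S_1,\dots,S_t\}$, I would build a candidate set $\CC$ of size $O(m)$ so that the parameter is preserved: the elements of $\UU$ as \emph{element candidates}, the distinguished candidate $c$, a \emph{budget candidate} $d$, and a constant-sized block $L$ of auxiliary dummies used only to pad and separate Copeland scores and to force uniqueness. As before, I would introduce one partial vote $\lambda_i$ per set $S_i$, obtained from a total order $\eta_i$ by deleting a designated block of comparisons, so that extending $\lambda_i$ to lift $c$ above the element candidates of $S_i$ encodes the decision to place $S_i$ into the cover. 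All remaining pairwise margins are fixed by a profile $Q$ of complete votes built via Lemma~\ref{thm:mcgarvey}, using the same parity fix as in the maximin proof (adjoining singleton sets $\{u\}$ and a dummy pair of sets so that $t$ and every element's multiplicity are even), which makes every required difference $D(\cdot,\cdot)-D_P(\cdot,\cdot)$ even and hence realizable.

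The Copeland mechanism I would aim for is as follows. In the baseline profile, namely $Q$ together with the \emph{default}, non-selecting extensions of the $\lambda_i$, each element candidate $u_i$ is poised against $c$ at margin $D(c,u_i)=-1$, so $c$ loses to $u_i$; selecting any set $S_j \ni u_i$ flips this pair inside $\lambda_j$ and pushes the margin to $+1$, so that $c$ beats $u_i$ \emph{if and only if} some selected set covers $u_i$. Independently, the budget candidate $d$ is arranged so that every selection awards $d$ exactly one additional pairwise victory, with margins tuned so that after at most $k$ selections the Copeland score of $d$ stays strictly below the target score $\sigma$ that $c$ attains when it beats all of $u_1,\dots,u_m$, while a $(k+1)$-st selection lets $d$ reach $\sigma$. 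The candidates in $L$ absorb the remaining victories so that $c$, at score $\sigma$, is strictly ahead of everyone else precisely when all elements are covered within budget.

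With this in place the two directions mirror the maximin argument. \emph{Forward:} a cover $\{S_{i_1},\dots,S_{i_k}\}$ yields the extension that selects exactly these $k$ sets; coverage guarantees $c$ beats every $u_i$, so $c$ reaches score $\sigma$, while $k$ selections keep $d$ and every other candidate strictly below $\sigma$, making $c$ the unique Copeland winner. \emph{Reverse:} from any extension making $c$ the unique winner I would prove, by claims analogous to those in Theorem~\ref{thm:npk_maximin}, that (i) at most $k$ of the $\lambda_i$ are extended in the selecting way, for otherwise $d$ reaches $\sigma$; and (ii) if some $u_i$ is left uncovered then $c$ loses that contest, dropping below $\sigma$ and ceding uniqueness to a competitor. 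Reading off $\HH=\{S_i : \lambda_i \text{ is selecting}\}$ then gives a set cover of size at most $k$. Since $|\CC|=O(m)$ and \textsc{Possible Winner} lies in $\NPshort$, Theorem~\ref{thm:ppt-reduction} yields the claimed lower bound, so no polynomial kernel exists unless~\caveat{}.

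The main obstacle is the calibration of the Copeland scores. Because Copeland aggregates many pairwise outcomes rather than hinging on a single critical margin (as maximin effectively does), I must ensure simultaneously that flipping each $\lambda_i$ changes exactly the intended set of victories, that the runner-up trails $c$ by exactly one Copeland point at the boundary so that coverage-within-budget is both necessary and sufficient and the winner is unique, and that all of these prescribed margins are jointly realizable under the even-difference constraint of Lemma~\ref{thm:mcgarvey}. Achieving this sharp ``exactly one point'' separation while respecting parity is the delicate part of the construction.
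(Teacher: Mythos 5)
Your high-level skeleton matches the paper's: a PPT from \textsc{Small Universe Set Cover} with one partial vote per set, ``selection'' encoded by whether $c$ is lifted in the extension, complete votes realized via Lemma~\ref{thm:mcgarvey} with the parity fix, and margins $D(c,u_i)$ tuned so that $c$ wins the pairwise contest against $u_i$ if and only if some selected set covers $u_i$. But the Copeland-specific calibration --- which you yourself flag as the delicate part --- is where the proof actually lives, and what you describe for it does not work. The budget gadget ``every selection awards $d$ exactly one additional pairwise victory'' is unrealizable with a constant-sized auxiliary block: a Copeland point is a threshold event on an aggregate margin, so a fixed pair $(d,e)$ flips at most once as the number of selecting extensions increases, and to gain one victory per selection over a range of up to $t$ selections you would need $\Omega(t)$ staggered opponents, destroying parameter preservation (recall $t$ is unbounded in the parameter $m+k$). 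What is actually needed, and what the paper does, is a single threshold pair: a runner-up $z$ is pinned at Copeland score $m+1$ by insurmountable margins $D(z,u_i)=D(z,d)=t+1$, always loses to $c$, and its one undecided contest, against a candidate $w$ with $D(z,w)=t-2k-1$, flips in $z$'s favor exactly when more than $k$ extensions are selecting --- at which point $z$ reaches $c$'s ceiling of $m+2$ (that ceiling itself being enforced by $c$ always losing to $w$). Symmetrically, $D(c,d)=t-2k+1$ forces at least $k$ selections for $c$ to beat $d$.

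The second gap is that you give no mechanism for keeping the $m$ element candidates below $c$. With $|\CC|=m+O(1)$ and $c$'s target score around $m+2$, each $u_i$ must lose to $\Omega(m)$ candidates, and the only candidates available in bulk are the other element candidates; your constant-sized block $L$ cannot absorb this. The paper handles it by imposing a cyclic tournament inside $\UU$, setting $D(u_i,u_j)=t+1$ for all $j$ in a half-window modulo $m$, so that every $u_i$ unconditionally loses to $\lfloor m/2\rfloor$ of its peers regardless of the extensions. Without these two pieces --- a genuine threshold pair for the budget and the intra-$\UU$ cycle for suppression --- the forward and reverse directions you sketch cannot be carried out, so the proposal as written has a real hole rather than a routine verification left to the reader.
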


\begin{proof}
Let $(\UU,\FF,k)$ be an instance of \textsc{Small Universe Set Cover}, where $\UU = \{u_1,\ldots,u_m\}$, and $\FF = \{S_1, \ldots, S_t\}$. For the purpose of this proof, we assume (without loss of generality) that $m \geq 6$. We now construct an instance $(\CC,V,c)$ of \textsc{Possible Winner} as follows. 

\paragraph*{Candidates.}~$\CC := \UU \uplus \{z, c, d, w\}$.

\paragraph*{Partial Votes, $P$.}~The first part of the voting profile comprises of $m$ partial votes, and will be denoted by $P$. For each $1 \leq i \leq m$, we first consider a profile built on a total order:
\[\eta_i := \UU \setminus S_i \succ z \succ c \succ d \succ S_i \succ w\]
Now, we obtain a partial order $\lambda_i$ based on $\eta_i$ as follows:
\[\lambda_i := \eta_i \setminus \left( \{z,c\} \times \left( S_i \uplus \{d,w\} \right)\right)\]
The profile $P$ consists of $\{ \lambda_i ~|~ 1 \leq i \leq m \}$.

\paragraph*{Complete Votes, $Q$.}~We now describe the remaining votes in the profile, which are linear orders designed to achieve specific pairwise difference scores among the candidates. This profile, denoted by $Q$, is defined according to~lemma\nobreakspace \ref {thm:mcgarvey} to contain votes such that the pairwise score differences of $P \cup Q$ satisfy the following.

\begin{itemize}
\item $D(c,d) = t - 2k + 1$
\item $D(z,w) = t - 2k - 1$
\item $D(c,u_i) = t - 1$
\item $D(c,z) = t+1$ 
\item $D(c,w) = -t-1$ 
\item $D(u_i,d) = D(z,u_i) = t+1$, for all $1 \leq i \leq m$ 
\item $D(z,d) = t+1$
\item $D(u_i,u_j) = t+1$, for all $j \in [i+1,i+\lfloor m/2 \rfloor]$, where addition is modulo $m$.
\end{itemize}

We note that the for all $c,c' \in \CC$, the difference $|D(c,c') - D_P(c,c')|$ is always even, as long $t$ is odd and the number of sets in $\FF$ that contain an element $a \in \UU$ is always odd. Note that the latter can always be ensured without loss of generality: indeed, if $a \in \UU$ occurs in an even number of sets, then we can always add the set $\{a\}$ if it is missing and remove it if it is present, flipping the parity in the process. The number of sets $t$ can be assumed to be odd by adding a dummy element and a dummy set that contains the said element. It is easy to see that these modifications always preserve the instance. 

Thus the constructed instance of \textsc{Possible Winner} is $(\CC,V,c)$, where $V := P \cup Q$. We now turn to the proof of correctness. 

In the forward direction, let $\HH \subseteq \FF$ be a set cover of size at most $k$. Without loss of generality, let $|\HH| = k$ (since a smaller set cover can always be extended artificially) and let $\HH = \{S_1, \ldots, S_k\}$ (by renaming).  

If $i \leq k$, let: 
\[\lambda_i^* := \UU \setminus S_i \succ z \succ c \succ d \succ S_i \succ w\]
If $i > k$, let: 
\[\lambda_i^* := \UU \setminus S_i \succ d \succ S_i \succ w \succ z \succ c\]
Clearly $\lambda_i^*$ extends $\lambda_i$ for all $i \in 1 \leq i \leq k$. Let $V^*$ denote the extended profile consisting of the votes $\{ \lambda_i^* ~|~ 1 \leq i \leq k\} \cup Q$. We now claim that $c$ is the unique winner with respect to the Copeland voting rule in $V^*$. 

First, consider the candidate $z$. Between $z$ and $u_i$, even if $z$ loses to $u_i$ in every $\lambda_i^*$, because $D(z,u_i) = t+1$, $z$ wins the pairwise election between $z$ and $u_i$. The same argument holds between $z$ and $d$. Therefore, the Copeland score of $z$, no matter how the partial votes were extended, is at least $(m+1)$. 

Further, note that all other candidates (apart from $c$) have a Copeland score of less than $(m+1)$, because they are guaranteed to lose to at least three candidates (assuming $m \geq 6$). In particular, observe that $u_i$ loses to at least $\lfloor m/2 \rfloor$ candidates, and $d$ loses to all $u_i$ (merely by its position in the extended votes), and $w$ loses to all $u_i$ (because of way the scores were designed). Therefore, the Copeland score of all candidates in $\CC\setminus \{z,c\}$ is strictly less than the Copeland score of $z$, and therefore they cannot be possible (co-)winners.

Now we restrict our attention to the contest between $z$ and $c$. First note that $c$ beats every $u_i$: since the sets of $\HH$ form a set cover, every $u_i$ is present in some $\lambda^*_i$ for $i \leq k$, in a position after $c$. Since the difference score between $c$ and $u_i$ was $(t-1)$, even if $c$ suffered defeat in every other extension, we have the pairwise score of $c$ and $u_i$ being at least $t-1 - (t-1) + 1 = 1$, which implies that $c$ defeats every $u_i$ in their pairwise election. Note that $c$ also defeats $d$ by getting ahead of $d$ in $k$ votes, making its final score $t - 2k + 1 + k - (t-k) = 1$. Finally, $c$ is defeated by $w$, simply by the preset difference score. Therefore, the Copeland score of $c$ is $(m+2)$.

Now, all that remains to be done is to rule $z$ out of the running. Note that $z$ is defeated by $w$ in their pairwise election: this is because $z$ defeats $w$ in $k$ of the extended votes, and is defeated by $w$ in the remaining. This implies that its final pairwise score with respect to $w$ is at most $t - 2k - 1 + k - (t-k) = -1$. Also note that $z$ loses to $c$ because of its predefined difference score. Thus, the Copeland score of $z$ in the extended vote is exactly $(m+1)$, and thus $c$ is the unique winner of the extended vote. 

We now turn to the reverse direction. Let $P^*$ be an extension of $P$ such that $V^* := P^* \cup Q$ admits $c$ as an unique winner with respect to the Copeland voting rule. As with the argument for Maximin voting rule, we first argue that $P^*$ must admit a certain structure, which will lead us to an almost self-evident set cover for $\UU$.

Let us denote by $P^*_C$ the set of votes in $P^*$ which are consistent with $c \succ d$, and let $P^*_W$ be the set of votes in $P^*$ which are consistent with $w \succ z$. Note that the votes in $P^*_C$ necessarily have the form:
\[\lambda_i^* := \UU \setminus S_i \succ z \succ c \succ d \succ S_i \succ w\]
and those in $P^*_W$ have the form: 
\[\lambda_i^* := \UU \setminus S_i \succ d \succ S_i \succ w \succ z \succ c\]
It is easy to check that this structure is directly imposed by the relative orderings that are fixed by the partial orders.

Before we argue the details of the scores, let us recall that in any extension of $P$, $z$ loses to $c$ and $z$ wins over $d$ and all candidates in $\UU$. Thus the Copeland score of $z$ is at least $(m+1)$. On the other hand, in any extension of $P$, $c$ loses to $w$, and therefore the Copeland score of $c$ is at most $(m+2)$. (These facts follow from the analysis in the forward direction.) 

Thus, we have the following situation. If $z$ wins over $w$, then $c$ cannot be the unique winner in the extended vote, because the score of $z$ goes up to $(m+2)$. Similarly, $c$ cannot afford to lose to any of $\UU \cup \{d\}$, because that will cause its score to drop below $(m+2)$, resulting in either a tie with $z$, or worse. These facts will successively lead us to the correctness of the reverse direction. 

Now, let us return to the sets $P^*_C$ and $P^*_W$. If $P^*_C$ has more than $k$ votes, then $z$ wins over $w$: the final score of $z$ is at least $t - 2k - 1 + (k+ 1) - (t-k-1) = 1$, and we have a contradiction. If $P^*_C$ has fewer than $k$ votes, then $c$ loses to $d$, with a score of at most $t - 2k + 1 + (k-1) - (t-k+1) = -1$, and we have a contradiction. 

Finally, suppose the sets corresponding to the votes of $P^*_C$ do not form a set cover. Consider an element $u_i$ of $\UU$ not covered by the union of these sets. Observe that $c$ now loses the pairwise election between itself and $u_i$ and is no longer in the running for being the unique winner in the extended vote. Therefore, the sets corresponding to the votes of $P^*_C$ form a set cover of size exactly $k$, as desired.
\qed\end{proof}

\subsection{Bucklin Voting Rule}

We now describe the reduction for  \textsc{Possible Winner} parameterized by the number of candidates, for the Bucklin voting rule.

\begin{theorem}
\label{thm:npk_bucklin}
The \textsc{Possible Winner} problem for the Bucklin voting rule, when parameterized by the number 
of candidates, does not admit a polynomial kernel unless~\caveat{}.
\end{theorem}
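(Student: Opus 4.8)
The plan is to give a polynomial parameter transformation from \textsc{Small Universe Set Cover}, exactly in the spirit of Theorems~\ref{thm:npk_maximin} and~\ref{thm:npk_copeland}, but now tailored to the level-based nature of the Bucklin rule. Starting from an instance $(\UU,\FF,k)$ I would build an election whose candidate set is $\UU$ together with the distinguished candidate $c$, a single competitor $d$, and a shared pool of $O(m)$ filler candidates; this keeps the number of candidates polynomial in $m$, as required for a PPT. The construction has one partial vote $\lambda_i$ per set $S_i$, plus a block of complete votes $Q$ whose only purpose is to fix, for a carefully chosen level $\ell$, the number of votes that rank each candidate in its top $\ell$ positions (its ``top-$\ell$ count''), together with the total number of voters so that the majority threshold lands where we want. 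Since a candidate's Bucklin score is at most $\ell$ iff its top-$\ell$ count exceeds half the voters, controlling these counts is exactly what is needed.

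The heart of the reduction is a gadget that makes ``activating $S_i$'' (via the extension of $\lambda_i$) simultaneously push \emph{all} of $S_i$ out of the top-$\ell$ band while pulling $c$ into it. I would realize this with a swappable pair of equal-size blocks inside $\lambda_i$: the elements of $S_i$, sitting in the top-$\ell$ band by default, and a block $B_i$ sitting just below, consisting of $d$ at its top, fillers in the middle, and $c$ at its very bottom, with $S_i$ and $B_i$ left mutually incomparable. Because the band contains exactly $|S_i|$ slots among these $2|S_i|$ candidates, the number of $B_i$-members that enter the band equals the number of $S_i$-members evicted. Placing $c$ at the bottom of $B_i$ then forces $c$ into the band only when \emph{every} member of $B_i$ is promoted, i.e.\ exactly when all of $S_i$ is evicted; placing $d$ at the top of $B_i$ makes $d$ enter the band whenever \emph{any} promotion happens. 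Calibrating $Q$ so that the base (all-unactivated) top-$\ell$ counts are majority $-\,k$ for $c$, majority $-\,k-1$ for $d$, and exactly majority for each $u_j$ completes the construction.

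For correctness, in the forward direction a set cover of size at most $k$ is padded to exactly $k$ sets and each is fully activated: $c$ reaches majority at level $\ell$, while $d$ (count majority $-1$), every $u_j$ (suppressed at least once, hence count $\le$ majority $-1$), and every filler stay strictly below majority, so $c$ is the unique Bucklin winner. In the reverse direction, from any extension making $c$ the unique winner, the condition that $c$ reach majority forces at least $k$ fully-activated votes, while the condition that $d$ stay below majority caps the number of votes with any promotion at $k$; hence there are exactly $k$ fully-activated votes and no others with promotions, and since $c$'s victory forces every $u_j$ below majority, each $u_j$ must be evicted in one of these $k$ votes, so the corresponding $k$ sets cover $\UU$. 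Theorem~\ref{thm:ppt-reduction} then yields the claimed kernel lower bound under~\caveat{}.

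The main obstacle is precisely the threshold nature of Bucklin: a single promotion of $c$ crosses the level-$\ell$ boundary for only one candidate, so forcing a whole set to be evicted per activation -- the feature that makes coverage correspond to set cover -- is what the block-swap with $c$ at the bottom must buy, and I expect the delicate points to be (i) realizing a \emph{uniform} level $\ell$ across sets of different sizes (which I would arrange by padding the sets to a common size, or by topping each $\lambda_i$ with rotating fillers so that no single filler is high too often), (ii) the parity and majority bookkeeping in $Q$, and (iii) checking that \emph{arbitrary} adversarial extensions -- not merely the two intended ones per vote -- cannot decouple ``$c$ up'' from ``all of $S_i$ evicted,'' which is exactly what the counting argument around the conservation identity is designed to guarantee.
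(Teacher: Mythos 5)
Your construction is essentially the paper's own reduction: a PPT from \textsc{Small Universe Set Cover} with one partial vote per set implementing a block swap around the Bucklin threshold level, a blocker candidate (your $d$, the paper's $z$) whose base top-$\ell$ count sits one below $c$'s so that keeping it under majority caps the number of activated votes at $k$, and filler blocks in the complete profile to calibrate all the counts. The only real difference is in the gadget's internals --- you place $c$ at the bottom of an equal-size promoted block so that $c$ enters the band only under full eviction of $S_i$ and then argue that every $u_j$ must be evicted somewhere, whereas the paper lets $c$ enter the band as soon as two elements of $S_i$ drop below it and recovers the set cover via a pigeonhole showing uncovered elements stay in the top $m+2$ positions of every extension --- and this does not change the argument in any essential way.
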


\begin{proof}
Let $(\UU,\FF,k)$ be an instance of \textsc{Small Universe Set Cover}, where $\UU = \{u_1,\ldots,u_m\}$, and 
$\FF = \{S_1, \ldots, S_t\}$. Without loss of generality, we assume that $t>k+1$, and that every set has at 
least two elements. We now construct an instance $(\CC,V,c)$ of \textsc{Possible Winner} as follows. 

\paragraph*{Candidates.}~$\CC := \UU \uplus \{z, c, a\} \uplus W \uplus D_1 \uplus D_2 \uplus D_3$, where $D_1$, $D_2$, and $D_3$ are any sets such that $|D_1|=m$, $|D_2|=2m$, and $|D_3|= $. $W:=\{w_1, w_2, \dots, w_{2m} \}$.

\paragraph*{Partial Votes, $P$.}~The first part of the voting profile comprises of $t$ partial votes, and will be denoted by $P$. For each $1 \leq i \leq t$, we first consider a profile built on a total order:
\begin{equation*}
\eta_i := \UU \setminus S_i \succ S_i \succ w_{i\% m} \succ w_{(i+1)\% m} \succ z \succ c \succ D_3 \succ \dots
\end{equation*}

Now, we obtain a partial order $\lambda_i$ based on $\eta_i$ as follows:
\[\lambda_i := \eta_i \setminus \left(\left( \{w_{i\% m}, w_{(i+1)\% m}, z,c\} \uplus D_3\right) \times S_i \right)\]
The profile $P$ consists of $\{ \lambda_i ~|~ 1 \leq i \leq t \}$.

\paragraph*{Complete Votes, $Q$.} 

\begin{eqnarray*}
 t-k-1 &:& D_1 \succ z \succ c \succ \dots \\
 1 &:& D_1 \succ c \succ a \succ z \succ \dots \\
 k-1 &:& D_2 \succ \dots 
\end{eqnarray*}

We now show that $(\UU,\FF,k)$ is an yes instance if and only if $(\CC,V,c)$ is an yes instance. 
Suppose, $\{S_j : j\in J\}$ forms a set cover. Then, consider the following extension of $P$ : 

\begin{equation*}
 (\UU \setminus S_j) \succ w_{j\% m} \succ w_{(j+1)\% m} \succ z \succ c \succ D_3 \succ S_j \dots, \text{ for } j\in J
\end{equation*}
\begin{equation*}
 (\UU \setminus S_j) \succ S_j \succ w_{j\% m} \succ w_{(j+1)\% m} \succ z \succ c \succ D_3 \succ \dots, \text{ for } j\notin J
\end{equation*}

We claim that in this extension, $c$ is the unique winner with Bucklin score $(m+2)$. First, let us establish the score of $c$. The candidate $c$ is already within the top $(m+1)$ choices in $(t-k)$ of the complete votes. In all the sets that form the set cover, $c$ is ranked within the first $(m+2)$ votes in the proposed extension of the corresponding vote (recall that every set has at least two elements). Therefore, there are a total of $t$ votes where $c$ is ranked within the top $(m+2)$ preferences. Further, consider a candidate $v \in \UU$. Such a candidate is not within the top $(m+2)$ choices of any of the complete votes. Let $S_i$ be the set that covers the element $v$. Note that in the extension of the vote $\lambda_i$, $v$ is not ranked among the top $(m+2)$ spots, since there are at least $m$ candidates from $D_3$ getting in the way. Therefore, $v$ has strictly fewer than $t$ votes where it is ranked among the top $(m+2)$ spots, and thus has a higher Bucklin score than $c$.

Now, the candidate $z$ is within the top $(m+2)$ ranks of at most $(t-k-1)$ voters among the complete votes. In the votes corresponding to the sets \emph{not} in the set cover, $z$ is placed beyond the first $(m+2)$ spots. Therefore, the number of votes where $z$ is among the top $(m+2)$ candidates is at most $(t-1)$, which makes its Bucklin score strictly larger than $(m+2)$. 

The candidates from $W$ are within the top $(m+2)$ positions only in a constant number of votes. The candidates $D_1 \cup \{a\}$ have $(t-k)$ votes (among the complete ones) in which they are ranked among the top $(m+2)$ preferences, but in all extensions, these candidates have ranks below $(m+2)$. Finally, the candidates in $D_3$ do not feature in the top $(m+2)$ positions of any of the complete votes, and similarly, the candidates in $D_2$ do not feature in the top $(m+2)$ positions of any of the extended votes. Therefore, the Bucklin scores of all these candidates is easily seen to be strictly larger than $(m+2)$, concluding the argument in the forward direction. 

Now, consider the reverse direction. Suppose, $(\CC,V,c)$ is an yes instance. For the same reasons described in the forward direction, observe that only the following candidates can win depending upon how the partial 
preferences get extended - either one of the candidates in $\UU$, or one of $z$ or $c$. Note that the Bucklin score of $z$ in any extension is at most $(m+3)$. Therefore, the Bucklin score of $c$ has to be $(m+2)$ or less. Among the complete votes $Q$, there are $(t-k)$ votes where the candidate $c$ appears in the top $(m+2)$ positions. To get majority within top $(m+2)$ positions, $c$ should be within top $(m+2)$ positions for at least $k$ of the extended votes in $P$.
Let us call these set of votes $P^{\prime}$. Now notice that whenever $c$ comes within top 
$(m+2)$ positions in a valid extension of $P$, the candidate $z$ also comes within top $(m+2)$ positions in the same vote. However, the candidate $z$ is already ranked among the top $(m+2)$ candidates in $(t-k-1)$ complete votes. Therefore, $z$ can appear within top $(m+2)$ positions in \emph{at most} $k$ extensions (since $c$ is the unique winner), implying that $|P^{\prime}|=k$. Further, note that the Bucklin score of $c$ cannot be strictly smaller than $(m+2)$ in any extension. Indeed, the candidate $c$ features in only one of the complete votes within the top $(m+1)$ positions, and it would have to be within the top $(m+1)$ positions in at least $(t-1)$ extensions. However, as discussed earlier, this would give $z$ exactly the same mileage, and therefore its score Bucklin score would be $(m-1)$ or even less; contradicting our assumption that $c$ is the unique winner. 

Now, we claim that the $S_i$'s corresponding to the votes in $P^{\prime}$ form a set cover for $\UU$. If not,l there is an element $x\in \UU$ that is uncovered. Observe that $x$ appears within top $m$ positions in all the extensions of the votes in $P^\prime$, by assumption. Further, in all the remaining extensions, since $z$ is not present among the top $(m+2)$ positions, we only have room for two candidates from $W$. The remaining positions must be filled by all the candidates corresponding to elements of $\UU$. Therefore, $x$ appears within the top $(m+2)$ positions of all the extended votes. Since these constitute half the total number of votes, we have that $x$ ties with $c$ in this situation, a contradiction. 
\qed\end{proof}

\subsection{Ranked Pairs Voting Rule}

We now describe the reduction for  \textsc{Possible Winner} parameterized by the number of candidates, for the ranked pairs voting rule.

\begin{theorem}
\label{thm:npk_rankedpairs}
The \textsc{Possible Winner} problem for the ranked pairs voting rule, when parameterized by the 
number of candidates, does not admit a polynomial kernel unless~\caveat{}.
\end{theorem}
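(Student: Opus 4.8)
The plan is to give a polynomial parameter transformation from \textsc{Small Universe Set Cover} to the \textsc{Possible Winner} problem for ranked pairs, following the same template as the maximin and Copeland reductions: we encode a set cover instance $(\UU,\FF,k)$ using $O(m)$ candidates, a collection of $t$ partial votes $P$ (one per set $S_i$), and a block of complete votes $Q$ whose pairwise margins are prescribed via Lemma~\ref{thm:mcgarvey}. Since the number of candidates will be polynomial in $m$, and \textsc{Possible Winner} for ranked pairs is in \NPshort{} while \textsc{Small Universe Set Cover} has no polynomial kernel unless~\caveat{}, Theorem~\ref{thm:ppt-reduction} will yield the claimed lower bound. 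The parity caveats of Lemma~\ref{thm:mcgarvey} (choosing $t$ odd/even and padding each element into an appropriate number of sets) will be handled exactly as before.

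The design principle is to set up the weighted majority graph so that, in the ``locked'' part fixed by $Q$, the distinguished candidate $c$ sits at the top of a near-total order determined by the tie-breaking, \emph{except} for one critical incoming edge whose orientation depends on whether the partial votes encode a genuine set cover. Concretely, I would arrange the margins of $Q$ so that almost all pairs have large prescribed differences that cannot be overturned by the $t$ partial extensions, making the ranked-pairs insertion order essentially forced; the only ``swingable'' edges are those incident to the set-cover gadget. First I would ensure that $c$ beats every element candidate $u_i$ provided $u_i$ is ``covered'' (appears below $c$ in at least one chosen extension), and that whenever fewer than $k$ sets are selected, some auxiliary candidate (playing the role of $w_1$ or $d$ in the earlier proofs) overtakes $c$ in the ranked-pairs order. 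The key new subtlety relative to maximin and Copeland is that ranked pairs depends on the \emph{global} order in which edges are locked, not just on the final Copeland-style tally, and we use parallel-universes tie breaking, so I must argue that in the \YES{} case there \emph{exists} a tie-breaking making $c$ the top of the resulting order, while in the \NO{} case $c$ fails to reach the top under \emph{every} tie-breaking.

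In the forward direction, given a set cover $\HH=\{S_1,\dots,S_k\}$ I would extend the partial votes $\lambda_i$ so that for $i\le k$ candidate $c$ is pushed above the relevant obstacle candidates and each element $u_i$ is covered, then exhibit an explicit tie-breaking order that processes the edges so that every edge pointing away from $c$ is locked before any edge pointing into $c$ could create a cycle; this shows $c$ is a (unique) ranked-pairs winner. In the reverse direction I would show, as in the Copeland proof, that any extension making $c$ the unique winner must place at most $k$ (hence exactly $k$) votes in the ``good'' configuration and that the corresponding sets must cover $\UU$: if some $u_i$ were uncovered, the margin $D(c,u_i)$ would flip sign, the edge $(u_i,c)$ would be locked early under every tie-breaking, and $c$ could no longer be the unique winner.

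The main obstacle I anticipate is controlling the ranked-pairs locking order robustly under parallel-universes tie breaking. Unlike maximin and Copeland, where the winner is a simple function of the final margins, here I must certify that the critical incoming edge of $c$ is considered at a point in the lock sequence where it is (in the \YES{} case) already redundant due to transitivity, and (in the \NO{} case) unavoidably decisive. Getting the prescribed margins of $Q$ to enforce a sufficiently rigid consideration order --- so that the only freedom left is exactly the set-cover choice --- is the delicate part, and this is presumably why the ranked-pairs reduction needs the larger $O(m^4|M|)$-scale gadget machinery and more candidates than the maximin and Copeland constructions. I would spend most of the effort making the ``all other pairs are locked in a forced order'' claim precise and verifying it holds simultaneously across all admissible tie-breakings.
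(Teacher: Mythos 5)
Your high-level plan matches the paper's template (a polynomial parameter transformation from \textsc{Small Universe Set Cover} with one partial vote per set, a block of complete votes realizing prescribed margins via Lemma~\ref{thm:mcgarvey}, and an appeal to Theorem~\ref{thm:ppt-reduction}), but the proposal stops exactly where the proof begins. You never specify the candidate set, the partial orders, or the margins, and you explicitly defer the one step that carries all the content --- making the locking order rigid enough that the only remaining freedom encodes the set-cover choice. The paper resolves this with a concrete and rather small gadget: $\CC = \UU \uplus \{a,b,c,w\}$, partial votes built from $\eta_i = \UU\setminus S_i \succ S_i \succ b \succ a \succ c \succ \cdots$ with the pairs in $\{a,c\}\times(S_i\uplus\{b\})$ left unspecified, and margins $D(w,a)=D(w,u_i)=D(c,b)=4t$ (locked first and unswingable by the $t$ partial votes), $D(c,w)=D(a,c)=t+2$, $D(b,a)=2k+4$, $D(u_i,c)=2$. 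With these numbers the entire correctness argument reduces to comparing a handful of margins against the threshold $t+2$: choosing more than $k$ extensions with $c\succ b$ drags $D(b,a)$ down to $t+2$, and leaving some $u_i$ uncovered pushes $D(u_i,c)$ up to $t+2$, in either case creating a tie with $D(c,w)$. Without exhibiting such a gadget, the proposal is a statement of intent, not a proof.

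There is also a substantive error in how you handle tie-breaking. You propose to show that in the \YES{} case ``there \emph{exists} a tie-breaking making $c$ the top'' and in the \NO{} case $c$ loses under \emph{every} tie-breaking. The paper works with the unique-winner semantics (the sole winner under \emph{all} tie-breakings), and its reverse direction depends on this: the contradiction in the \NO{} case is merely that some pair \emph{ties} with $D(c,w)$ at $t+2$, so that \emph{some} tie-breaking locks $u_i \succ c$ or $a \succ c$ first and dethrones $c$. Under your existential reading of the \YES{} case, a tie would not be a contradiction, and the reverse direction as you sketched it would not close. Two smaller points: the $O(m^4|M|)$ figure you cite is the kernel size for \textsc{Coalitional Manipulation} under ranked pairs and has nothing to do with this reduction, and the ranked-pairs gadget in fact uses \emph{fewer} auxiliary candidates than the maximin and Copeland constructions, not more.
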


\begin{proof}
Let $(\UU,\FF,k)$ be an instance of \textsc{Small Universe Set Cover}, where $\UU = \{u_1,\ldots,u_m\}$, and 
$\FF = \{S_1, \ldots, S_t\}$. Without loss of generality, we assume that $t$ is even. We now construct an 
instance $(\CC,V,c)$ of \textsc{Possible Winner} as follows. 

\paragraph*{Candidates.}~$\CC := \UU \uplus \{a, b, c, w \}$.

\paragraph*{Partial Votes, $P$.}~The first part of the voting profile comprises of $t$ partial votes, and will 
be denoted by $P$. For each $1 \leq i \leq t$, we first consider a profile built on a total order:
\[\eta_i := \UU \setminus S_i \succ S_i \succ b \succ a \succ c \succ \dots\]
Now, we obtain a partial order $\lambda_i$ based on $\eta_i$ as follows:
\[\lambda_i := \eta_i \setminus \left(\{ a, c\} \times \left( S_i \uplus \{ b \} \right)\right)\]
The profile $P$ consists of $\{ \lambda_i ~|~ 1 \leq i \leq t \}$.

\paragraph*{Complete Votes, $Q$.} We add complete votes such that along with the already determined 
pairs from the partial votes $P$, we have following.

\begin{itemize}
\item $D(v_i, c) = 2, \forall 1\le i\le t $
\item $D(c,b) = 4t$
\item $D(c,w) = t+2$
\item $D(b,a) = 2k + 4$ 
\item $D(w,a) = 4t$ 
\item $D(a,c) = t+2$, $\forall$ $1 \leq i \leq m$ 
\item $D(w, v_i) = 4t$
\end{itemize}

We now show that $(\UU,\FF,k)$ is an yes instance if and only if $(\CC,V,c)$ is an yes instance. 
Suppose, $\{S_j : j\in J\}$ forms a set cover. Then, consider the following extension of $P$ : 
\[ \UU \setminus S_j \succ a \succ c \succ S_j \succ b \succ \dots , \forall j\in J\]
\[ \UU \setminus S_j \succ S_j \succ b \succ a \succ c \succ \dots , \forall j\notin J\]
We claim that the candidate $c$ is the unique winner in this extension. Note that the pairs $(w \succ a)$ and  $(w \succ v_i)$ (for all $1 \leq i \leq t$) get locked first (since these differences are clearly the highest and unchanged). The pair $(c,b)$ gets locked next, with a difference score of $(3t + 2k)$. Now, since the votes in which $c \succ b$ are based on a set cover of size at most $k$, the the pairwise difference between $b$ and $a$ becomes at least $2k + 4 - k + (t - k) = t + 4$. Therefore, the next pair to get locked is $b \succ a$. Finally, for any element $v_i \in \UU$, the difference $D(v_i,c)$ is at most $2 + (t-1) = t+1$, since there is at least one vote where $c \succ v_i$ (given that we used a set cover in the extension). It is now easy to see that the next highest pairwise difference is between $c$ and $w$, so the ordering $c \succ w$ gets locked, and at this point, by transitivity, $c$ is superior to $w, b, a$ and all $v_i$. It follows that $c$ wins the election irrespective the 
sequence in which pairs are considered subsequently.      

Now suppose, $(\CC,V,c)$ is an yes instance. Notice that, irrespective of the extension of the 
votes in $P$, $c \succ b, w \succ a, w \succ v_i ,\forall 1\le i\le t, $ are locked first. Now, if $b \succ c$ in all the extended votes, then it is easy to check that $b \succ a$ gets locked next, with a difference score of $2k+4+t$; leaving us with $D(v_i,c) = t+2 = D(c,w)$, where $v_i \succ c$ could be a potential lock-in. This implies the possibility of a $v_i$ being a winner in some choice of tie-breaking, a contradiction to the assumption that $c$ is the unique winner. Therefore, there are at least some votes in the extended profile where $c \succ b$. We now claim that there are at most $k$ such votes. Indeed, if there are more, then $D(b,a) = 2k + 4 - (k+1) + (t-k-1) = t + 2$. Therefore, after the forced lock-ins above, we have $D(b,a) = D(c,w) = D(a,c) = t + 2$. Here, again, it is possible for $a \succ c$ to be locked in before the other choices, and we again have a contradiction.  

Finally, we have that $c \succ b$ in at most $k$ many extensions in $P$. Call the set of indices of these extensions $J$. We claim that $\{ S_j : j\in J \}$ forms a set cover. If not, then suppose an element $v_i$ is not covered by $\{ S_j : j\in J \}$. Then, the candidate $v_i$ comes before $c$ in all the extensions which makes 
$ D(v_i,c) $ become $(t+2)$, which in turn ties with $D(c,w)$. This again contradicts the fact that $c$ is the unique winner. Therefore, if there is an extension that makes $c$ the unique winner, then we have the desired set cover. 
\qed\end{proof}

\section{Polynomial Kernels for Coalitional Manipulation Problem}

We now describe a kernelization algorithm for any scoring rule which has the following properties. For $m\in \mathbb{N}$, let $(\alpha_1, \ldots, \alpha_m)$ and $(\alpha_1^{\prime}, \ldots, \alpha_{m+1}^{\prime})$ be the normalized score vectors for a scoring rule for an election with $m$ and $(m+1)$ candidates respectively. Then $\alpha_1 = O(poly(m))$ and $\alpha_i = \alpha_i^{\prime}$ for all $1\le i\le m$. 
For ease of exposition, we present the result for the Borda voting rule only.
\setcounter{reductionrule}{0}

\begin{theorem}
\label{thm:pk_borda} 
The \textsc{Coalitional Manipulation} problem for the Borda voting rule admits a polynomial kernel 
when the number of manipulators is $O(poly(m))$.
\end{theorem}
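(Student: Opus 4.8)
The plan is to exploit the fact that, for a scoring rule, the only information the non-manipulative votes contribute to the instance is the vector of scores they award to the candidates, and moreover only the \emph{differences} of these scores, truncated to a polynomially bounded window, actually influence the outcome. First I would fix the structure of an optimal manipulation: for the Borda rule it is weakly optimal for every manipulator to rank the distinguished candidate $c$ first, since this maximizes the score of $c$ and simultaneously frees the remaining $m-1$ candidates to receive only the scores $\{0,1,\dots,m-2\}$ from each manipulator. Writing $b_x$ for the Borda score that $x$ receives from the non-manipulative votes, the final score of $c$ is then exactly $b_c + |M|(m-1)$, and each other candidate $c_j$ must be kept strictly below this value. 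Thus the instance is a \YES-instance if and only if each manipulator can assign a permutation of $\{0,\dots,m-2\}$ to the non-$c$ candidates so that the total awarded to $c_j$ stays below the threshold $\tau_j := b_c - b_j + |M|(m-1)$ for every $j$. This reduces the question to a feasibility problem that depends on the non-manipulative votes only through the integers $\Delta_j := b_j - b_c$.

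Next I would argue that this feasibility depends only on the $\Delta_j$ clamped to a window of width $O(|M|m)$. On the one hand, if some $\Delta_j \ge |M|(m-1)$ then $\tau_j \le 0$ and, since any candidate receives a non-negative manipulator contribution, the instance is a (trivial) \NO-instance. On the other hand, the largest contribution the manipulators can heap on a single candidate is $|M|(m-2)$, so whenever $\tau_j > |M|(m-2)$, equivalently $\Delta_j < |M|$, the constraint for $c_j$ is vacuous. Hence the feasibility of the assignment problem is unchanged if we replace each $\Delta_j$ by $X_j := \min\{\max\{\Delta_j,\ |M|-1\},\ |M|(m-1)\}$: values below the window are mapped to a still-vacuous threshold, values at or above the window to a still-infeasible one, and values inside the window are preserved exactly. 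In particular $\sum_j |X_j| = O(|M|m^2)$.

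It then remains to manufacture a \emph{small} set of complete votes over the \emph{same} candidate set $\CC$ whose Borda scores realize the prescribed differences $X_j$ (with $c$ as reference). For the Borda rule the score differences are an affine image of the pairwise margins, since $b_x = \tfrac{(m-1)n}{2} + \tfrac{1}{2}\sum_{y\ne x} D(x,y)$ for a profile of $n$ votes; so choosing margins with $\sum_{y}\big(D(c_j,y)-D(c,y)\big) = 2X_j$ and invoking the McGarvey-type construction of Lemma\nobreakspace\ref{thm:mcgarvey} produces such a profile $V'$ using $O(\sum_j|X_j|) = O(|M|m^2)$ votes. The kernel keeps the same candidates $\CC$, the same set of manipulators $M$, and replaces the non-manipulative profile by $V'$; by the two previous paragraphs this is equivalent to the original instance, and its total size is $O(m^2|M|)$ votes of length $m$, hence polynomial in the parameter $m$ (using $|M| = O(\mathrm{poly}(m))$).

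The step I expect to be the main obstacle is the last one: realizing arbitrary prescribed score differences while keeping the candidate set \emph{exactly} $\CC$. The natural tool, Lemma\nobreakspace\ref{score_gen}, introduces auxiliary candidates $D$, and these cannot be tolerated here, because extra low-scoring candidates would hand the manipulators additional low positions into which the genuinely threatening candidates could be pushed, strictly enlarging the set of achievable score-profiles and potentially flipping a \NO-instance into a \YES-instance. The Borda-specific margin identity above circumvents this, but it brings in the parity constraints of Lemma\nobreakspace\ref{thm:mcgarvey}, which I would discharge exactly as in the hardness reductions (padding with dummy elements and reversed vote pairs), before verifying the clamping equivalence carefully in both directions. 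For a general scoring rule satisfying the stated prefix-consistency and polynomial-range conditions, this is precisely the point at which those hypotheses are used, to guarantee that the prescribed differences remain realizable over $\CC$ with only polynomially many votes.
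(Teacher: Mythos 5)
Your high-level strategy coincides with the paper's: observe that only the non-manipulator score differences $\Delta_j = s_{NM}(c_j) - s_{NM}(c)$ matter, clamp them to a window of width $O(|M|\cdot m)$ outside of which the constraint for $c_j$ is either vacuous or unsatisfiable, and then re-realize the clamped differences with a profile of $O(\mathrm{poly}(m))$ votes. Where you diverge is the realization step, and that is where your argument has a genuine gap. You insist on keeping the candidate set exactly $\CC$ and propose to realize the clamped differences $X_j$ via pairwise margins and Lemma~\ref{thm:mcgarvey}, using $s(x) = \frac{(m-1)n}{2} + \frac{1}{2}\sum_{y\neq x}D(x,y)$. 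But every Borda profile over exactly $m$ candidates satisfies the invariant $\sum_x s(x) = n\binom{m}{2}$: equivalently, the row sums $R_x = \sum_y D(x,y)$ of the (antisymmetric) margin matrix sum to zero, and your constraints $R_{c_j} - R_c = 2X_j$ then force $R_c = -\frac{2}{m}\sum_j X_j$, hence the congruence $m \mid 2\sum_j X_j$. The original $\Delta_j$ satisfy this (they come from an actual profile), but clamping changes the sum arbitrarily, so for odd $m$ the clamped vector need not be realizable over $\CC$ at all; your invocation of Lemma~\ref{thm:mcgarvey} silently assumes a solution to a linear system that can be empty. The gap is patchable --- whenever at least one $X_j$ was actually clamped you have slack to shift it further in the safe (vacuous or infeasible) direction and repair the congruence, and if nothing was clamped no repair is needed --- but neither this nor the parity hypothesis of Lemma~\ref{thm:mcgarvey} is discharged in your write-up.

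Moreover, the objection you raise against the dummy-candidate route is precisely the issue the paper's construction is designed to avoid, and it avoids it differently from what you fear: the paper adds a \emph{single} auxiliary candidate $d$ whose score is pushed below $K - m|M|$, and observes that it is weakly optimal for every manipulator to place $d$ in the \emph{second} position rather than a low one. With $c$ first and $d$ second, the remaining candidates compete for exactly the score multiset $\{0,\dots,m-2\}$, the same as in the original $m$-candidate instance, so no new score profiles become achievable for the threatening candidates; the only side effect is that $c$ gains $|M|$ extra score (the top position is now worth $m$ rather than $m-1$), which is compensated by setting $s^{*}_{NM}(c) := s_{NM}(c) - |M|$. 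This is why Lemma~\ref{score_gen} suffices there. Your margin-based route is a legitimate alternative that keeps the candidate set unchanged, but as written it trades a problem the paper already solves for the unaddressed divisibility obstruction above.
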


\begin{proof}
 Let $c$ be the candidate whom the manipulators aim to make winner. Let $M$ be the set of manipulators and $\mathcal{C}$ candidate set. Let $s_{NM}(x)$ be the score of candidate $x$ from the votes of the non-manipulators. Without loss of generality, we assume that, all the manipulators place $c$ at top position in their votes. Hence, the final score of $c$ is $= s_{NM}(c) + |M|(m-1)$, which we denote by $s(c)$.
 Now, if $s_{NM}(x) \ge s(c)$ for any $x \ne c$, then $c$ cannot win and we output \textit{no}. Hence, we assume that $s_{NM}(x) < s(c)$ for all $x \ne c$. Now let us define $s_{NM}^*(x)$ as follows.
 $$ s_{NM}^*(x) := \max \{ s_{NM}(x), s_{NM}(c) \} $$
 Also define $s_{NM}^*(c)$ as follows.
 $$ s_{NM}^*(c) := s_{NM}(c) - |M|$$
 We define a \textsc{Coalitional Manipulation} instance with $(m+1)$ candidates as $(\mathcal{C}^{\prime},NM,M,c)$, where $\mathcal{C}^{\prime} = \mathcal{C} \uplus \{d\}$ is the candidate set, $M$ is the set of manipulators, $c$ is the distinguished candidates, and $NM$ is the non-manipulators' vote is such that it generates score of $x\in \mathcal{C}$ to be $ K + (s_{NM}^*(x) - s_{NM}(c))$, where $K \in \mathbb{N}$ is same for $x\in \mathcal{C}$, and the score of $d$ is less than $ K - m|M| $. The existence of such a $NM$ of size $poly(m)$ is due to~Lemma\nobreakspace \ref {score_gen}. Hence,
 once we show the equivalence of these two instances, we have a kernel whose size is polynomial in $m$. 
 The equivalence of the two instances follows from the facts that (1) the new instance has $(m+1)$ candidates and $c$ is always placed at the top position without loss of generality, $c$ gets $|M|$ score more than the initial instance and this is compensated in $s_{NM}^*(c)$, (2) the relative score difference from the final score of $c$ to the current score of every $x\in \mathcal{C}\setminus \{c\}$ is same in both the instances, and (3) in the new instance, we can assume without loss of generality that the candidate $d$ will be placed in the second position in all the manipulators' votes.
\qed\end{proof}

We now move on to the voting rules that are based on weighted majority graph.
The reduction rules modify the weighted majority graph maintaining the property that there exists a set of votes that can realize the modified weighted majority graph. In particular, the final weighted majority graph is realizable with a set of votes.

\setcounter{reductionrule}{0}
\begin{theorem}
\label{thm:pk_maximin} 
The \textsc{Coalitional Manipulation} problem for the maximin voting rule admits a polynomial kernel 
when the number of manipulators is $O(poly(m))$.
\end{theorem}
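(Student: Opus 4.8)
The plan is to exploit the fact that the maximin winner is a function only of the weighted majority graph, so it suffices to compress the non-manipulators' contribution to that graph. Write $D_{NM}(x,y)$ for the margin contributed by the non-manipulators and set $\ell := |M|$. A single manipulator vote changes each margin $D(x,y)$ by exactly $\pm 1$, so the $\ell$ manipulators together can shift each $D_{NM}(x,y)$ by any amount in $[-\ell,\ell]$, subject only to the constraint that the shifts of all pairs are jointly realized by $\ell$ linear orders. Since the manipulators may clearly place $c$ at the top of every vote, the margins incident to $c$ are forced, $D(c,y)=D_{NM}(c,y)+\ell$, so the final maximin score of $c$ equals the manipulation-independent quantity $s_c := \ell + \min_{y\ne c} D_{NM}(c,y)$. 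Making $c$ the unique winner thus amounts to choosing the orders of the remaining candidates so that every $x\ne c$ has some $y$ with $D_{NM}(x,y)+\mu_{xy} < s_c$, where $\mu_{xy}\in[-\ell,\ell]$ is the (jointly realizable) manipulator shift.

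The compression proceeds via two reduction rules on $D_{NM}$, each of which rebuilds the non-manipulator profile using McGarvey's construction (Lemma~\ref{thm:mcgarvey}). First, a trivial-win rule: if $\min_{y\ne c}D_{NM}(c,y) > -\ell$, i.e.\ $s_c>0$, then by antisymmetry every $x\ne c$ already satisfies $D(x,c)=-D_{NM}(c,x)-\ell \le -s_c < s_c$, so $c$ is the unique winner no matter how the other candidates are ordered, and we output a fixed trivial \YES{}-instance. Second, the \emph{capping} rule: replace each margin $D_{NM}(x,y)$ by its truncation to a window of width $\Theta(\ell)$, preserving the antisymmetry $D(x,y)=-D(y,x)$ and the uniform parity required by Lemma~\ref{thm:mcgarvey} (parity can be fixed by the same dummy-set trick used in the hardness reductions). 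Once every margin has magnitude $O(\ell)$, the graph has $\binom{m}{2}=O(m^2)$ edges of total absolute weight $O(m^2\ell)$, so Lemma~\ref{thm:mcgarvey} realizes it with $O(m^2\ell)=O(m^2|M|)$ votes; together with the $\ell=O(poly(m))$ manipulators this is a kernel of size $O(poly(m))$, as claimed.

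The heart of the argument, and the step I expect to be the main obstacle, is the correctness of the capping rule. For a \emph{fixed} manipulation the clip map is monotone, so $c$'s own score behaves well: $\min_{y}\mathrm{clip}(D_{NM}(c,y)) = \mathrm{clip}\big(\min_y D_{NM}(c,y)\big)$, i.e.\ the clipped $s_c$ is the clip of $s_c$. For every other candidate $x$ one checks that only margins within $\pm\ell$ of the bar $s_c$ are decision-relevant: a margin with $D_{NM}(x,y) < s_c-\ell$ drives $x$ below $s_c$ regardless of how the manipulators order $x$ and $y$, while a margin with $D_{NM}(x,y)\ge s_c+\ell$ can never be used to push $x$ below $s_c$; in both saturated cases the exact value is irrelevant and is faithfully recorded by the sign of its truncation. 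The delicate point is the regime where $c$'s margin against some candidate is large (so $s_c$ is far from $0$ and clipping moves $c$'s score as well): here one must argue, using antisymmetry, that the nemesis $y^\ast$ achieving $\min_y D_{NM}(c,y)$ carries a correspondingly large reverse margin toward $c$, and that every edge whose truncation could alter a comparison saturates consistently on both sides, so that the classification of each candidate as ``can/cannot be driven below $s_c$'' is preserved. Establishing this invariant for all candidates simultaneously --- and choosing the window width and the target value of $s_c$ in the rebuilt instance so that it holds --- is where the real work lies; the remaining forward and reverse directions then mirror the analysis already carried out for the Borda kernel in Theorem~\ref{thm:pk_borda}.
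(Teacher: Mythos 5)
Your overall strategy is the paper's: reduce everything to the weighted majority graph, dispose of the trivial case where $c$'s guaranteed score already wins, cap the remaining margins to a window of width $\Theta(|M|)$, and re-realize the capped graph with $O(m^2|M|)$ votes via Lemma~\ref{thm:mcgarvey}. The one genuine gap is exactly the step you flag yourself: you never actually specify the capping rule or prove it sound, and your discussion of the ``delicate point'' (what happens when $s_c = |M| + \min_{y\ne c}D_{NM}(c,y)$ is far from zero) suggests you are implicitly clipping margins into a window centred at $0$, which is why you find yourself needing a consistency argument about the nemesis $y^\ast$ and about edges ``saturating on both sides.''

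The paper's resolution dissolves this difficulty and is worth internalising: anchor the window at $s:=\min_{y\ne c}D_{NM}(c,y)$ rather than at $0$. Concretely, after the trivial rule one may assume $s+|M|<0$; then (i) any \emph{negative} margin exceeding $2|M|+s$ is pulled down to $2|M|+s+1$ or $2|M|+s+2$ (whichever preserves parity) --- sound because such a margin stays strictly above $|M|+s=s_c$ no matter how the manipulators vote, so it can never be the reason a candidate's maximin score falls to or below $c$'s; (ii) any margin below $s$ is pulled up to $s-1$ or $s-2$ --- sound because such a margin already pins that candidate strictly below $s_c$ regardless of the manipulation, and the truncated value still does; and (iii) if $s<-4|M|$, a single uniform additive offset translates the whole window to magnitude $O(|M|)$, which preserves every maximin comparison because all decision-relevant margins now lie in one interval that is shifted rigidly. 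Steps (i) and (ii) are precisely your ``only margins within $O(|M|)$ of the bar matter'' observation made precise, and step (iii) is all that is needed in the regime you were worried about; no two-sided saturation analysis is required. Two small points in your favour: your trivial rule with the strict inequality $s_c>0$ is arguably the right one for \emph{unique} winning (the paper's $s+|M|\ge 0$ admits a tie at the boundary), and parity is preserved simply by choosing between the two adjacent truncation values --- no dummy-element trick is needed here.
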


\begin{proof}
 Let $c$ be the distinguished candidate of the manipulators. Let $M$ be the set of manipulators. 
 We can assume that $ |M| \ge 2$ since for $|M| = 1$, the problem is in $\Pb$ \citep{bartholdi1989computational}.
 Define $s$ to be $\min_{x\in C\setminus \{c\}} D_{(\VV\setminus M)}(c,x)$. 
 So, $s$ is the maximin score of the candidate $c$ from the votes except from $M$. Since, the maximin voting 
 rule is monotone, we can assume that the voters in $M$ put the candidate $c$ at top position of their 
 preferences. Hence, $c$'s final maximin score will be $s+|M|$. This provides the following reduction 
 rule.
 
 \begin{reductionrule}\label{rr:condorcet_winner}
  If $ s+|M| \ge 0 $, then output \YES{}.
 \end{reductionrule}
 
 In the following, we will assume $s+|M|$ is negative. Now we propose the following reduction rules on the weighted majority graph.
 
 \begin{reductionrule}\label{rr:weight_up}
  If $D_{(\VV\setminus M)}(c_i, c_j) < 0$ and $D_{(\VV\setminus M)}(c_i, c_j) > 2|M| + s$, then make $D_{(\VV\setminus M)}(c_i, c_j)$ either $2|M|+s+1$ or $2|M|+s+2$ whichever keeps the parity of 
  $D_{(\VV\setminus M)}(c_i, c_j)$ unchanged.
 \end{reductionrule}
 
 If $D_{(\VV\setminus M)}(c_i, c_j) > 2|M| + s,$,then $D_{\VV}(c_i, c_j) > |M| + s$ irrespective of the way the manipulators vote. Hence, whether or not the maximin score of $c_i$ and $c_j$ will exceed the maximin score of $c$ does not 
 gets affected by this reduction rule. Hence, ~reduction rule\nobreakspace \ref {rr:weight_up} is sound.
 
 \begin{reductionrule}\label{rr:weight_lw}
  If $D_{(\VV\setminus M)}(c_i, c_j) < s $, then make $D_{(\VV\setminus M)}(c_i, c_j)$ either $s-1$ or
  $s-2$ whichever keeps the parity of $D_{(\VV\setminus M)}(c_i, c_j)$ unchanged.
 \end{reductionrule}
 
 The argument for the correctness of reduction rule\nobreakspace \ref {rr:weight_lw} is similar to the argument for reduction rule\nobreakspace \ref {rr:weight_up}. 
 Here onward, we may assume that whenever $D_{(\VV\setminus M)}(c_i, c_j) < 0$, 
 $ s-2 \le D_{(\VV\setminus M)}(c_i, c_j) \le 2|M|+s+2 $
 
 \begin{reductionrule}\label{rr:weight_closer}
  If $ s < -4|M| $ then subtract $ s+5|M| $ whenever $ s-2 \le D_{(\VV\setminus M)}(c_i, c_j) \le 2|M|+s+2 $.
 \end{reductionrule}
 
 The correctness of reduction rule\nobreakspace \ref {rr:weight_closer} follows from the fact that it adds linear fixed offsets to all the 
 edges of the weighted majority graph. Hence, if there a voting profile of the voters in $M$ that makes the candidate 
 $c$ win in the original instance, the same voting profile will make $c$ win the election in the reduced 
 instance and vice versa.
 
 Now, we have an weighted majority graph with $O(|M|)$ weights for every edge. Also, all the weights have uniform parity 
 and thus the result follows from lemma\nobreakspace \ref {thm:mcgarvey}.
\qed\end{proof}
\setcounter{reductionrule}{0}

The proof of the following theorem is analogous to the proof of~Theorem\nobreakspace \ref {thm:pk_maximin}. In the interest of space, we omit the proof.
\begin{theorem}
\label{thm:pk_copeland} 
The \textsc{Coalitional Manipulation} problem for the Copeland voting rule admits a polynomial kernel 
when the number of manipulators is $O(poly(m))$.
\end{theorem}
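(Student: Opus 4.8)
The \textsc{Coalitional Manipulation} problem for the Copeland voting rule admits a polynomial kernel when the number of manipulators is $O(poly(m))$.

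The plan is to mirror the structure of the maximin proof (Theorem~\ref{thm:pk_maximin}) exactly: work on the weighted majority graph of the non-manipulators, observe that each manipulator can shift every pairwise margin $D_{(\VV\setminus M)}(c_i,c_j)$ by at most $|M|$ in either direction, and then clamp all margins that are already "decided" (too large to flip, or too far negative to rescue) to a value that is $O(|M|)$ in magnitude while preserving parity. Once every edge weight is bounded by $O(|M|)$ and all weights share a common parity, Lemma~\ref{thm:mcgarvey} lets us realize the clamped graph with $O(m^2|M|)$ votes, giving the claimed $O(m^2|M|)$ kernel.

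First I would fix notation: as in the maximin proof, assume without loss of generality $|M|\ge 2$ (the single-manipulator Copeland case is polynomial-time solvable) and assume the manipulators all rank $c$ first. The key quantity is the Copeland score of $c$ achievable after manipulation. For Copeland what matters for each ordered pair $(c_i,c_j)$ with $c_i,c_j\neq c$ is only the \emph{sign} of the final margin $D_{\VV}(c_i,c_j)=D_{(\VV\setminus M)}(c_i,c_j)+(\text{manipulator contribution})$, and the manipulator contribution lies in $[-|M|,|M|]$. Hence a pair is \emph{fixed} if $|D_{(\VV\setminus M)}(c_i,c_j)|>|M|$: no voting of the manipulators can change which candidate wins that pairwise contest, so the contribution of that pair to every Copeland score is predetermined. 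The reduction rules should therefore clamp every fixed margin down to $|M|+1$ or $|M|+2$ (preserving parity) on the appropriate side, leaving the sign — and thus the pairwise winner — unchanged, exactly analogous to \textbf{Reduction rule~\ref{rr:weight_up}} and \textbf{Reduction rule~\ref{rr:weight_lw}}. Margins involving $c$ are handled separately: since the manipulators rank $c$ first, $c$'s pairwise contests are monotone in $|M|$, and margins $D_{(\VV\setminus M)}(c,x)$ that are too negative to be saved or already safely positive can likewise be clamped to magnitude $O(|M|)$.

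After clamping, every edge weight has magnitude $O(|M|)$, and as in the maximin argument one applies a uniform linear offset (the analogue of \textbf{Reduction rule~\ref{rr:weight_closer}}) only if needed to bring the whole graph into a bounded parity-consistent range; since such an offset shifts all margins equally, it does not alter any pairwise winner and hence preserves the Copeland scores of all candidates, so the instance remains equivalent. Finally, invoke Lemma~\ref{thm:mcgarvey} to realize the modified weighted majority graph with $n=O(\sum_{\{a,b\}}|D(a,b)|)=O(m^2|M|)$ non-manipulator votes, yielding a kernel of size $O(m^2|M|)$.

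I expect the main obstacle, and the place where the Copeland argument genuinely diverges from maximin, to be the soundness of clamping the pairs \emph{not} involving $c$. In maximin a candidate's score is governed by a single minimizing opponent, so only the relevant edge matters; for Copeland, however, each candidate's score is a \emph{count} of won pairwise contests over all opponents, so I must verify that clamping a fixed margin preserves not merely that individual pairwise outcome but the entire vector of Copeland scores — i.e.\ that no candidate's count can be pushed to meet or exceed $c$'s count as a side effect. The care needed is to argue that because clamping preserves the sign of every fixed margin and leaves every still-\emph{flippable} margin (those with $|D_{(\VV\setminus M)}(c_i,c_j)|\le|M|$) untouched, the set of manipulator strategies and the resulting Copeland scores of \emph{every} candidate are identical in the original and reduced instances; once that invariant is established, equivalence of the two instances is immediate and the Lemma~\ref{thm:mcgarvey} realization closes the argument.
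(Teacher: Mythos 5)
Your proposal matches the paper's proof essentially exactly: clamp every pairwise margin exceeding $|M|$ in absolute value down to $|M|+1$ or $|M|+2$ preserving parity, observe that since Copeland depends only on the \emph{signs} of the final margins $D_{\VV}(c_i,c_j)$ this preserves every candidate's Copeland score under every manipulator strategy, and then realize the bounded, parity-uniform weighted majority graph with $O(m^2|M|)$ votes via Lemma~\ref{thm:mcgarvey}. One small caution: the uniform-offset step you import from the maximin proof is never actually needed here (clamping already bounds all weights), and its stated justification would be wrong for Copeland if it were applied, since a uniform additive shift can flip the sign of a small margin and hence change a pairwise winner — so you should simply drop that step, as the paper does.
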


\begin{proof}
We apply the following reduction rule.
 \begin{reductionrule}\label{rr:weight_up}
  If $D_{(\VV\setminus M)}(c_i, c_j) > |M| $, then make $D_{(\VV\setminus M)}(c_i, c_j)$ 
  either $|M|+1$ or $|M|+2$ whichever keeps the parity of 
  $D_{(\VV\setminus M)}(c_i, c_j)$ unchanged.
 \end{reductionrule}
 Given any votes of $M$, $D_{\VV}(c_i, c_j) > 0 $ in the original instance if and only if 
 $D_{\VV}(c_i, c_j) > 0 $ in the reduced instance. Hence each candidate has the same Copeland score 
 and thus the reduction rule is correct.
 
 Now, we have an weighted majority graph with $O(|M|)$ weights for every edges. Also, all the weights have uniform parity. 
 From lemma\nobreakspace \ref {thm:mcgarvey}, we can realize the weighted majority graph using $O(m^2.|M|)$ votes. 
\qed\end{proof}
\setcounter{reductionrule}{0}

\begin{theorem}
\label{thm:pk_rankedpair} 
The \textsc{Coalitional Manipulation} problem for the ranked pairs voting rule admits a polynomial kernel 
when the number of manipulators is $O(poly(m))$.
\end{theorem}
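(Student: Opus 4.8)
The plan is to mirror the strategy used for maximin (Theorem~\ref{thm:pk_maximin}) and Copeland (Theorem~\ref{thm:pk_copeland}): operate on the weighted majority graph of the non-manipulators, compress its edge weights to polynomially bounded integers of uniform parity while preserving the answer, and then re-realize the compressed graph with few votes via Lemma~\ref{thm:mcgarvey}. The starting observation is that any profile cast by the $|M|$ manipulators shifts each edge weight $D_{(\VV\setminus M)}(c_i,c_j)$ by some integer $\delta_{ij}\in[-|M|,|M|]$, and that the entire ranked pairs computation---the order in which pairs are locked, which locks are skipped due to cycles, and hence the winner under parallel-universe tie breaking---depends only on the total preorder (the relative order together with the pattern of ties) of the final margins $D_{(\VV\setminus M)}(c_i,c_j)+\delta_{ij}$. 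Thus it suffices to compress the weights so that this preorder is preserved for \emph{every} shift vector $\delta$ with $|\delta_{ij}|\le|M|$; since the set of shifts realizable by actual manipulator votes is a subset of this box, soundness and completeness for the manipulation question follow simultaneously, and I never need to reason about the correlations between edges imposed by the linear orders.

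The key lemma to establish is a local criterion for preserving this preorder. For any two edges $e,f$ with non-manipulator margins $w_e,w_f$, a shift can reverse or equalize their order only when $|w_e-w_f|\le 2|M|$, because $\delta_e-\delta_f$ ranges over $[-2|M|,2|M|]$. Hence if a second weighting $w'$ satisfies $w'_e-w'_f=w_e-w_f$ whenever $|w_e-w_f|\le 2|M|$, and merely agrees in sign (both exceeding $2|M|$ in absolute value) otherwise, then $\mathrm{sign}\big((w_e-w_f)+x\big)=\mathrm{sign}\big((w'_e-w'_f)+x\big)$ for all $x\in[-2|M|,2|M|]$, so the two instances induce identical preorders of final margins under every admissible $\delta$, and therefore the same unique-winner status of $c$. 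I would realize this criterion by the following reduction rule: sort the distinct values taken by $D_{(\VV\setminus M)}$ over all $\binom{m}{2}$ edges, keep every gap between consecutive distinct values that is at most $2|M|+2$ unchanged, and shrink every larger gap down to $2|M|+2$, always adjusting by an even amount so that the uniform parity required by Lemma~\ref{thm:mcgarvey} is maintained. This preserves the order, the equalities, all pairwise differences of magnitude at most $2|M|$, and the signs of the larger ones, exactly as the criterion demands.

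After the rule is exhausted there are at most $O(m^2)$ gaps, each of size $O(|M|)$, so every edge weight has magnitude $O(m^2|M|)$ after recentering; Lemma~\ref{thm:mcgarvey} then reconstructs a non-manipulator profile using $O\big(\sum_e |w'_e|\big)=O(m^2\cdot m^2|M|)=O(m^4|M|)$ votes, giving the claimed kernel. The main obstacle I anticipate is the local-criterion lemma itself: one must argue carefully that the ranked pairs procedure, \emph{including} its cycle-avoidance behaviour and the parallel-universe tie breaking used to define the (unique) winner, is a function of the final-margin preorder alone, and that preserving that preorder uniformly over the whole shift box is sufficient. The tie-sensitive definition of the unique winner makes it essential that the compression preserve equalities of margins exactly and not merely their strict order, which is precisely why small gaps must be retained verbatim rather than collapsed.
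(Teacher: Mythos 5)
Your proposal follows the same route as the paper's proof: sort the non-manipulator margins $D_{(\VV\setminus M)}(\cdot,\cdot)$, collapse every large gap between consecutive values down to $O(|M|)$ by subtracting a uniform even offset from the suffix, argue that all pairwise comparisons of final margins are preserved for every manipulator profile, and then re-realize the compressed weighted majority graph with $O(m^4|M|)$ votes via Lemma~\ref{thm:mcgarvey}. The one substantive difference is your gap threshold: the paper shrinks any gap exceeding $|M|+2$ down to $|M|+1$ or $|M|+2$, whereas you retain gaps up to $2|M|+2$ and shrink only the larger ones. Your choice is the more careful one. The quantity that can reorder two edges $e,f$ is $\delta_f-\delta_e$, where each $\delta$ lies in $[-|M|,|M|]$, so it ranges over $[-2|M|,2|M|]$; a gap of, say, $|M|+3$ (which the paper's rule would compress to $|M|+1$ or $|M|+2$) can still be overcome or turned into a tie by the manipulators when $|M|\ge 3$, and compressing it changes exactly which $\delta$'s flip or tie that comparison. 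Since ranked pairs with parallel-universe tie-breaking is sensitive to ties among margins, this matters, and your criterion (preserve differences of magnitude at most $2|M|$ exactly, preserve only the sign of larger ones) is the right invariant; the paper's one-sentence correctness claim glosses over this. You also make explicit the lemma the paper leaves implicit, namely that the ranked-pairs unique-winner status is a function of the preorder of final margins alone, and you correctly note that over-approximating the achievable shifts by the full box $[-|M|,|M|]^{\binom{m}{2}}$ is harmless since it is only used as a sufficient condition for equivalence. The resulting kernel size, $O(m^2|M|)$ per weight and $O(m^4|M|)$ votes, matches the paper's.
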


\begin{proof}
 Consider all non-negative $D_{(\VV\setminus M)}(c_i, c_j)$ and arrange them in non-decreasing order. Let 
 the ordering be $x_1, x_2, \dots, x_l$ where $ l = {m \choose 2} $. Now keep applying following reduction 
 rule till possible. Define $x_0 = 0$.
 \begin{reductionrule}
  If there exist any $i$ such that, $x_i - x_{i-1} > |M|+2$, subtract an even offset to all 
  $x_i, x_{i+1}, \dots, x_l$ such that $x_i$ becomes either $(x_{i-1} + |M| + 1)$ or 
  $(x_{i-1} + |M| + 2)$.
 \end{reductionrule} 
 The reduction rule is correct since for any set of votes by $M$, for any four candidates $a, b, x, y \in \CC$, 
 $ D(a,b) > D(x,y) $ in the original instance if and only if $ D(a,b) > D(x,y) $ in the reduced instance. Now, 
 we have an weighted majority graph with $O(m^2.|M|)$ weights for every edges. Also, all the weights have uniform 
 parity and hence can be realized with $O(m^4.|M|)$ votes~Lemma\nobreakspace \ref {thm:mcgarvey}. 
\qed\end{proof}

\section{Conclusion and Future Work}
Here, we showed that the \textsc{Possible Winner} problem does not admit a polynomial 
kernel for many common voting rules under the complexity theoretic assumption that 
\caveat is not true. We also showed the existence of polynomial kernels for the 
\textsc{Coalitional Manipulation} problem for many common voting rules. This shows that the \textsc{Possible Winner} 
problem is a significantly harder problem than the \textsc{Coalitional Manipulation} problem, 
although both the problems are in $\NPCb{}$.

There are other interesting parameterizations 
of these problems for which fixed parameter tractable algorithms are known but 
the corresponding kernelization questions are still open. One such parameter is the 
total number of pairs $s$ in all the votes for which an ordering has not been specified. 
With this parameter, a simple $O(2^s. \text{ poly($m,n$)})$ algorithm is known \citep{betzler2009multivariate}. 
However, the corresponding kernelization question is still open.


\bibliographystyle{apalike}
\bibliography{pw}

\begin{thebibliography}{}

\bibitem[Bartholdi et~al., 1989]{bartholdi1989computational}
Bartholdi, J., Tovey, C., and Trick, M. (1989).
\newblock The computational difficulty of manipulating an election.
\newblock {\em Social Choice and Welfare (SCW)}, 6(3):227--241.

\bibitem[Baumeister et~al., 2011]{baumeister2011computational}
Baumeister, D., Roos, M., and Rothe, J. (2011).
\newblock Computational complexity of two variants of the possible winner
  problem.
\newblock In {\em The 10th International Conference on Autonomous Agents and
  Multiagent Systems (AAMAS)}, pages 853--860. International Foundation for
  Autonomous Agents and Multiagent Systems.

\bibitem[Betzler, 2010]{betzler2010problem}
Betzler, N. (2010).
\newblock On problem kernels for possible winner determination under the
  k-approval protocol.
\newblock In {\em Mathematical Foundations of Computer Science (MFCS)}, pages
  114--125. Springer.

\bibitem[Betzler et~al., 2012]{Betzlerxyz}
Betzler, N., Bredereck, R., Chen, J., and Niedermeier, R. (2012).
\newblock The multivariate algorithmic revolution and beyond.
\newblock chapter Studies in Computational Aspects of Voting: Open Problems of
  Downey and Fellows, pages 318--363. Springer-Verlag, Berlin, Heidelberg.

\bibitem[Betzler et~al., 2010]{betzler2010partial}
Betzler, N., Bredereck, R., and Niedermeier, R. (2010).
\newblock Partial kernelization for rank aggregation: theory and experiments.
\newblock In {\em International Symposium on Parameterized and Exact
  Computation (IPEC)}, pages 26--37. Springer.

\bibitem[Betzler and Dorn, 2009]{betzler2009towards}
Betzler, N. and Dorn, B. (2009).
\newblock Towards a dichotomy of finding possible winners in elections based on
  scoring rules.
\newblock In {\em Mathematical Foundations of Computer Science (MFCS)}, pages
  124--136. Springer.

\bibitem[Betzler et~al., 2009a]{betzler2009fixed}
Betzler, N., Fellows, M.~R., Guo, J., Niedermeier, R., and Rosamond, F.~A.
  (2009a).
\newblock Fixed-parameter algorithms for kemeny rankings.
\newblock {\em Theoretical Computer Science (TCS)}, 410(45):4554--4570.

\bibitem[Betzler et~al., 2009b]{betzler2009multivariate}
Betzler, N., Hemmann, S., and Niedermeier, R. (2009b).
\newblock A {M}ultivariate {C}omplexity {A}nalysis of {D}etermining {P}ossible
  {W}inners given {I}ncomplete {V}otes.
\newblock In {\em International Joint Conference on Artificial Intelligence
  (IJCAI)}, volume~9, pages 53--58.

\bibitem[{B}odlaender et~al., 2009]{BodlaenderThomasseYeo2009}
{B}odlaender, H.~L., {T}homass{\'e}, S., and {Y}eo, A. (2009).
\newblock {K}ernel {B}ounds for {D}isjoint {C}ycles and {D}isjoint {P}aths.
\newblock In Fiat, A. and Sanders, P., editors, {\em Proceedings of the 17th
  Annual European Symposium,on Algorithms ({ESA} 2009), Copenhagen, Denmark,
  September 7-9, 2009.}, volume 5757 of {\em Lecture Notes in Computer
  Science}, pages 635--646. Springer.

\bibitem[Bredereck et~al., 2014a]{ninechallenges}
Bredereck, R., Chen, J., Faliszewski, P., Guo, J., Niedermeier, R., and
  Woeginger, G.~J. (2014a).
\newblock Parameterized algorithmics for computational social choice: Nine
  research challenges.
\newblock In {\em Tsinghua Science and Technology}, volume~19, pages 358--373.
  IEEE.

\bibitem[Bredereck et~al., 2014b]{bredereck2014prices}
Bredereck, R., Chen, J., Faliszewski, P., Nichterlein, A., and Niedermeier, R.
  (2014b).
\newblock Prices matter for the parameterized complexity of shift bribery.

\bibitem[Bredereck et~al., 2012]{bredereck2012multivariate}
Bredereck, R., Chen, J., Hartung, S., Niedermeier, R., Such{\`y}, O., and
  Kratsch, S. (2012).
\newblock A multivariate complexity analysis of lobbying in multiple referenda.
\newblock In {\em International Conference on Artificial Intelligence (AAAI)}.

\bibitem[Brill and Fischer, 2012]{brill2012price}
Brill, M. and Fischer, F.~A. (2012).
\newblock The price of neutrality for the ranked pairs method.
\newblock In {\em International Conference on Artificial Intelligence (AAAI)}.

\bibitem[Chevaleyre et~al., 2010]{chevaleyre2010possible}
Chevaleyre, Y., Lang, J., Maudet, N., and Monnot, J. (2010).
\newblock Possible winners when new candidates are added: The case of scoring
  rules.
\newblock In {\em International Conference on Artificial Intelligence (AAAI)}.

\bibitem[Conitzer et~al., 2009]{conitzer2009preference}
Conitzer, V., Rognlie, M., and Xia, L. (2009).
\newblock Preference functions that score rankings and maximum likelihood
  estimation.
\newblock In {\em Proceedings of the 21st International Joint Conference on
  Artificial Intelligence (IJCAI)}, volume~9, pages 109--115.

\bibitem[Dey et~al., 2015]{deykernel}
Dey, P., Misra, N., and Narahari, Y. (2015).
\newblock Kernelization complexity of possible winner and coalitional
  manipulation problems in voting.
\newblock In {\em Proceeding of the 14th International Conference on Autonomous
  Systems and Multiagent Systems (AAMAS-15)}.

\bibitem[Dom et~al., 2009]{DomLokSau2009}
Dom, M., Lokshtanov, D., and Saurabh, S. (2009).
\newblock Incompressibility through colors and {ID}s.
\newblock In {\em Automata, Languages and Programming (ICALP), 36th
  International Colloquium, Part I}, volume 5555 of {\em LNCS}, pages 378--389.

\bibitem[Dorn and Schlotter, 2012]{dorn2012multivariate}
Dorn, B. and Schlotter, I. (2012).
\newblock Multivariate complexity analysis of swap bribery.
\newblock {\em Algorithmica}, 64(1):126--151.

\bibitem[Downey and Fellows, 1999]{downey1999parameterized}
Downey, R.~G. and Fellows, M.~R. (1999).
\newblock {\em Parameterized {C}omplexity}, volume~3.
\newblock springer Heidelberg.

\bibitem[Ephrati and Rosenschein, 1991]{ephrati1991clarke}
Ephrati, E. and Rosenschein, J. (1991).
\newblock The {C}larke tax as a consensus mechanism among automated agents.
\newblock In {\em Proceedings of the Ninth International Conference on
  Artificial Intelligence (AAAI)}, pages 173--178.

\bibitem[Faliszewski et~al., 2008]{faliszewski2008copeland}
Faliszewski, P., Hemaspaandra, E., and Schnoor, H. (2008).
\newblock Copeland voting: Ties matter.
\newblock In {\em Proceedings of the 7th International Conference on Autonomous
  Agents and Multiagent Systems (AAMAS)}, pages 983--990. International
  Foundation for Autonomous Agents and Multiagent Systems.

\bibitem[Faliszewski et~al., 2010]{faliszewski2010manipulation}
Faliszewski, P., Hemaspaandra, E., and Schnoor, H. (2010).
\newblock Manipulation of copeland elections.
\newblock In {\em Proceedings of the 9th International Conference on Autonomous
  Agents and Multiagent Systems (AAMAS)}, pages 367--374. International
  Foundation for Autonomous Agents and Multiagent Systems.

\bibitem[Faliszewski et~al., 2014]{faliszewski2014complexity}
Faliszewski, P., Reisch, Y., Rothe, J., and Schend, L. (2014).
\newblock Complexity of manipulation, bribery, and campaign management in
  bucklin and fallback voting.
\newblock In {\em Proceedings of the 13th International Conference on
  Autonomous Agents and Multiagent Systems (AAMAS)}, pages 1357--1358.
  International Foundation for Autonomous Agents and Multiagent Systems.

\bibitem[Flum and Grohe, 2006]{flum2006parameterized}
Flum, J. and Grohe, M. (2006).
\newblock {\em Parameterized {C}omplexity {T}heory}, volume~3.
\newblock Springer.

\bibitem[Froese et~al., 2013]{froese2013parameterized}
Froese, V., Van~Bevern, R., Niedermeier, R., and Sorge, M. (2013).
\newblock A parameterized complexity analysis of combinatorial feature
  selection problems.
\newblock In {\em Mathematical Foundations of Computer Science (MFCS)}, pages
  445--456. Springer.

\bibitem[Konczak and Lang, 2005]{konczak2005voting}
Konczak, K. and Lang, J. (2005).
\newblock Voting procedures with incomplete preferences.
\newblock In {\em Proceedings of the International Joint Conference on
  Artificial Intelligence-05 Multidisciplinary Workshop on Advances in
  Preference Handling}, volume~20.

\bibitem[Lang et~al., 2012]{lang2012winner}
Lang, J., Pini, M.~S., Rossi, F., Salvagnin, D., Venable, K.~B., and Walsh, T.
  (2012).
\newblock Winner determination in voting trees with incomplete preferences and
  weighted votes.
\newblock {\em Autonomous Agents and Multiagent Systems}, 25(1):130--157.

\bibitem[Lang et~al., 2007]{lang2007winner}
Lang, J., Pini, M.~S., Rossi, F., Venable, K.~B., and Walsh, T. (2007).
\newblock Winner determination in sequential majority voting.
\newblock In {\em Proceedings of the 20th International Joint Conference on
  Artificial Intelligence (IJCAI)}, volume~7, pages 1372--1377.

\bibitem[Lokshtanov et~al., 2012]{lokshtanov2012kernelization}
Lokshtanov, D., Misra, N., and Saurabh, S. (2012).
\newblock Kernelization--preprocessing with a guarantee.
\newblock In {\em The Multivariate Algorithmic Revolution and Beyond}, pages
  129--161. Springer.

\bibitem[McGarvey, 1953]{mcgarvey1953theorem}
McGarvey, D.~C. (1953).
\newblock A theorem on the construction of voting paradoxes.
\newblock {\em Econometrica: Journal of the Econometric Society}, pages
  608--610.

\bibitem[Niedermeier, 2002]{niedermeier2006invitation}
Niedermeier, R. (2002).
\newblock Invitation to fixed-parameter algorithms.
\newblock {\em Habilitationschrift, University of T{\"u}bingen}.

\bibitem[Pennock et~al., 2000]{pennock2000social}
Pennock, D.~M., Horvitz, E., and Giles, C.~L. (2000).
\newblock Social choice theory and recommender systems: Analysis of the
  axiomatic foundations of collaborative filtering.
\newblock In {\em Proceedings of the 17th International Conference on
  Artificial Intelligence (AAAI)}.

\bibitem[Pini et~al., 2007]{pini2007incompleteness}
Pini, M.~S., Rossi, F., Venable, K.~B., and Walsh, T. (2007).
\newblock Incompleteness and incomparability in preference aggregation.
\newblock In {\em Proceedings of the 20nd International Joint Conference on
  Artificial Intelligence (IJCAI)}, volume~7, pages 1464--1469.

\bibitem[Walsh, 2007]{walsh2007uncertainty}
Walsh, T. (2007).
\newblock Uncertainty in preference elicitation and aggregation.
\newblock In {\em Proceedings of the International Conference on Artificial
  Intelligence (AAAI)}, volume~22, page~3.

\bibitem[Weihe, 1998]{weihe1998covering}
Weihe, K. (1998).
\newblock Covering trains by stations or the power of data reduction.
\newblock {\em Proceedings of Algorithms and Experiments, ALEX}, pages 1--8.

\bibitem[Xia and Conitzer, 2008]{xia2008determining}
Xia, L. and Conitzer, V. (2008).
\newblock Determining possible and necessary winners under common voting rules
  given partial orders.
\newblock In {\em Proceedings of the International Conference on Artificial
  Intelligence (AAAI)}, volume~8, pages 196--201.

\bibitem[Xia et~al., 2009]{xia2009complexity}
Xia, L., Zuckerman, M., Procaccia, A.~D., Conitzer, V., and Rosenschein, J.~S.
  (2009).
\newblock Complexity of unweighted coalitional manipulation under some common
  voting rules.
\newblock In {\em Proceedings of the 21st International Joint Conference on
  Artificial Intelligence (IJCAI)}, volume~9, pages 348--352.

\end{thebibliography}

\end{document}